\newtheorem{definition}{Definition}
\newtheorem{theorem}{Theorem}
\newtheorem{lemma}{Lemma}
\newcommand{\rank}{\text{rank}}
\begin{document}
\title{A Repair Framework for Scalar MDS Codes}
\author{
\IEEEauthorblockN{Karthikeyan Shanmugam\IEEEauthorrefmark{1} \IEEEmembership{Student~Member,~IEEE}, Dimitris S. Papailiopoulos\IEEEauthorrefmark{1} \IEEEmembership{Student~Member,~IEEE}, Alexandros G. Dimakis\IEEEauthorrefmark{1} \IEEEmembership{Member,~IEEE}, and Giuseppe Caire\IEEEauthorrefmark{2} \IEEEmembership{Fellow,~IEEE}}

\IEEEauthorblockA{ \IEEEauthorrefmark{1}Department of Electrical and Computer Engineering \\
The University of Texas at Austin, Austin, TX-78712 \\
\texttt{\{karthiksh,dimitris\}@utexas.edu,dimakis@austin.utexas.edu}} 

\IEEEauthorblockA{ \IEEEauthorrefmark{2}Department of Electrical Engineering \\
University of Southern California, Los Angeles, CA-90089 \\
\texttt{caire@usc.edu}}
}
\maketitle

\begin{abstract}
Several works have developed {\it vector-linear} maximum-distance separable (MDS) storage codes that minimize the total {\it communication cost} required to repair a single coded symbol after an erasure, referred to as repair bandwidth (BW). Vector codes allow communicating fewer sub-symbols per node, instead of the entire content. This allows non trivial savings in repair BW. In sharp contrast, classic codes, like Reed-Solomon (RS), used in current storage systems, are deemed to suffer from {\it naive repair}, i.e. downloading the entire stored message to repair one failed node. This mainly happens because they are \textit{scalar-linear}. 

In this work, we present a simple framework that treats scalar codes as vector-linear. In some cases, this allows significant savings in repair BW. We show that vectorized scalar codes exhibit properties that simplify the design of repair schemes. Our framework can be seen as a finite field analogue of real interference alignment.

Using our simplified framework, we design a scheme that we call {\it clique-repair} which provably identifies the best linear repair strategy for any scalar $2$-parity MDS code, under some conditions on the sub-field chosen for vectorization. We specify optimal repair schemes for specific $(5,3)$- and $(6,4)$-Reed-Solomon (RS) codes. 
Further, we present a repair strategy for the RS code currently deployed in the Facebook Analytics Hadoop cluster
that leads to $20\%$ of repair BW savings over {\it naive repair} which is the repair scheme currently used for this code.
\let\thefootnote\relax\footnote[1]{ This paper was presented in part at \textit{50th Annual Allerton Conference on Communication, Control and Computing} 2012 \cite{shanmugam2012repair}. This research was partially supported by NSF Awards 1055099, 1218235 and research gifts by Google, Intel and Microsoft.}%
\end{abstract}
\begin{IEEEkeywords}
   Scalar MDS Codes; Reed Solomon; clique-repair; alignment.
\end{IEEEkeywords}

\section{Introduction}

Large-scale distributed storage systems employ erasure coding to offer data reliability against hardware failures. Typically, the erasure codes employed are $(n,k)$ MDS (maximum distance separable) codes. An important property that ensures data reliability against failures, is that encoded data from any $k$ nodes suffice to recover the data stored. However, a central issue that arises in coded storage is the {\it Repair Problem}: how to maintain the encoded representation when a \textit{single} node erasure occurs. To maintain the same redundancy posterior to an erasure, a new node has to join the storage array and regenerate the lost contents by downloading and processing data from the remaining storage nodes. Classic codes, like Reed-Solomon are scalar MDS codes. Currently used repair scheme for these codes is \textit{naive repair}. This involves downloading all the contents of any $k$ of the remaining nodes to reconstruct the entire file and then replacing the coded sub-symbols of a single failed node. 

	During repair process of an erasure, there are several metrics that can be optimized, namely \textit{repair bandwidth} (BW) and \textit{locality}\cite{Local1}\cite{Local2}\cite{Local3}\cite{Local4}\cite{huang2012erasure}\cite{sathiamoorthy2013xoring}. Currently, the most well understood one is the total number of bits communicated in the network, i.e. \textit{repair bandwidth} (BW). This was characterized in~\cite{DimakisGWWR:08} as a function of storage per node. 
Codes with minimum storage that offer optimal bandwidth are MDS, and are called minimum storage regenerating (MSR) codes. Building on the work in~\cite{DimakisGWWR:08}, a great volume of studies have developed MSR codes \cite{Survey,storagewiki,Tamo, PermCodes,RashmiProduct,SuhR:09,KVSKR:09,cadambe2011polynomial,PD1,rashmi2013piggy}.   

	In this paper, we deal with the following specific repair scenario for \textit{systematic} MSR codes: a file consisting of $M$ sub-symbols, over some field, is stored in $n$ nodes using an $(n,k)$ vector systematic MDS code. Every node contains $\alpha=\frac{M}{k}$ sub-symbols over the field. The first $k$ systematic nodes store uncoded sub-symbols in groups of $\frac{M}{k}$. The parity nodes contain the coded data. An MDS code can tolerate $n-k$ erasures. 
	
	Suppose one of the systematic nodes fail and this needs to replaced. For such a repair, $\beta$ sub-symbols are downloaded from every remaining parity node, through suitable linear combinations of the $\alpha$ symbols present in each parity node. From every remaining systematic node, at least $\beta$ symbols are downloaded. The downloaded symbols must be sufficient to generate the contents of the failed node through linear operations for successful repair.  According to the cut-set lower bounds of \cite{DimakisGWWR:08}, the optimum per-node download achievable by any code and repair scheme is $\beta = \frac{M}{k (n-k)} = \frac{\alpha}{n-k}$ when exactly $\beta$ symbols are downloaded from \textit{all} the remaining $n-1$ nodes for repair. It is easily seen that the optimum repair strategy for MSR codes has immense benefit over naive repair for constant rate codes and for large $n$. The key property of MSR codes that enables the non-trivial repair is that they are vector codes, i.e. data in a node is a collection of smaller sub-symbols over a field and few linear combinations from every node suffice for repair of a single failure. 

	MSR codes, with efficient encoding and decoding schemes, that meet the minimum cut-set BW bounds, derived in \cite{DimakisGWWR:08}, exist for rate $k/n \leq 1/2$ \cite{RashmiProduct}. In the high rate regime, \cite{Tamo, PermCodes,SuhR:09,KVSKR:09,PD1} have presented constructions that achieve the optimal repair bandwidth. However, the amount of \textit{subpacketization} (sub-symbols) required is exponential in the parameters $n$ and $k$. Code constructions in \cite{cadambe2011polynomial} rectify this problem by constructing high rate codes, for specific rates, that have polynomial subpacketization. For more details on regenerating codes for other scenarios, we refer the reader to the surveys \cite{Survey1}\cite{Survey2}\cite{Survey}.

	An interesting problem is developing repair strategies for {\it existing} systematic scalar linear MDS codes that are currently used in erasure coded storage systems. A major limiting issue of these codes is that they lack the fundamental ingredient of repair optimal ones: the vector-code property. Naive repair is currently the only known strategy for these codes.
	
In this work, we focus on repairing a failed systematic node of a systematic scalar linear MDS code, defined over a large extension field.  The focus is \textit{not} on designing codes that achieve the cut-set bound of $\frac{\alpha}{n-k} (n-1)$ for repair bandwidth but on analyzing the repair efficiency of existing ones. We show that any scalar linear MDS code, can be vectorized over a suitable smaller sub-field. When a systematic scalar linear MDS code is vectorized, the problem of designing the right linear combinations of stored symbols, from a surviving parity node, to be used for repair (also called as \textit{repair vector} design) can be equivalently seen as the problem of designing \textit{repair field elements}.  Instead of designing a repair vector for each equation downloaded, a field element belonging to the extension field is chosen for every repair equation of a vectorized scalar code. These field elements satisfy some linear independence constraints.  This equivalent formulation is the main technical contribution of the paper. This gives some analytical insights for repair of $2$-parity codes when vectorized over specific sub-fields. We summarize our contributions below.

{\bf Our contributions:}
In this work, we develop a framework to represent scalar linear MDS codes in a vector form, when they are constructed over extension fields. 
The vector form provides more flexibility in designing non-trivial repair strategies. We pose the problem of designing repair vectors (the best linear combinations to download) for repairing a systematic node, as a problem of designing repair field elements satisfying some algebraic linear dependence properties. Using this framework, we develop an algorithm, called clique-repair, that outputs an optimal repair scheme for a given 2-parity scalar linear MDS code, viewed as a vector code over a suitably chosen sub-field. This is based on an analytical condition, obtained through the repair field elements approach, that \textit{directly} relates the code's generator matrix entries to the repair bandwidth. We show that, for a specific $(6,4)$ Reed Solomon code, the clique repair scheme obtains nontrivial gains in terms of repair bandwidth.  For this $(6,4)$ RS code and another specific $(5,3)$ RS code, the gains can be brought close to the optimal cut-set bound of \cite{DimakisGWWR:08}, by vectorizing over a smaller sub-field. Further, we present numerical results regarding the repair of the $(14,10)$ RS code currently used in production \cite{sathiamoorthy2013xoring} by Facebook Hadoop Analytics cluster. There, we observe a $20\%$ savings in terms of repair BW compared to naive repair.

\section{Repair of MDS Storage Codes}

In this section, we first state the repair BW minimization problem for systematic vector MDS codes to clarify the implications of storing vectors per node instead of scalars. Throughout the paper, we consider the case of downloading sub-symbols from all the remaining $n-1$ nodes to repair a single failed node. We see that scalar-linear MDS codes have an inherent deficit when assuming indivisible coded symbols.

\subsection{Vector MDS Codes} \label{Sec:VectorMDS}
Let a file ${\bf x}$ be subpacketized into $M=k(n-k)\beta$ $p$-ary information symbols such that ${\bf x}\in\mathbb{F}^{M\times 1}$ and partitioned in $k$ parts ${\bf x}=\left[{\bf x}^T_1\ldots{\bf x}^T_k\right]^T$, with ${\bf x}_i\in\mathbb{F}^{\frac{M}{k}\times 1}$, where $M$ denotes the file size and $\mathbb{F}\equiv\mathbb{GF}(p)$. Here, the number of sub-symbols over $\mathbb{GF}(p)$ stored in a node is $\alpha=\frac{M}{k}=(n-k)\beta$. Let us define the \textit{degree of subpacketization} to be $\beta\in\mathbb{Z}^{+}$. We want to store this file with rate $\frac{k}{n}\le 1$ across $k$ systematic and $n-k$ parity storage units with storage capacity $\frac{M}{k}$ $p$-ary symbols each. 

The encoding is given by:

{\small

\begin{equation} \label{Eqn:vectorMDSeq}
\mathbf{y}=
\left[
\begin{array}{c}
 \mathbf{y}_1\\
  \vdots \\
   \vdots \\
   \vdots \\
  \mathbf{y}_n
\end{array}
\right]=
\left[
\begin{array}{ccc}
\mathbf{I}_{\alpha}^{1}& \ldots & \mathbf{0} \\
    \mathbf{0}         & \ddots & \mathbf{0} \\
     \mathbf{0}        & \ldots  & \mathbf{I}_{\alpha}^{k} \\
\hline
{\bf P}^{(k+1)}_1 & \ldots & {\bf P}^{(k+1)}_k\\
\vdots & \vdots&\vdots\\
{\bf P}^{(n)}_k & \ldots & {\bf P}^{(n)}_k
\end{array}
\right]
\left[
\begin{array}{c}
 \mathbf{x}_1 \\ 
 \vdots \\
 \mathbf{x}_k
\end{array}
\right]
\end{equation}
}where ${\bf P}^{(j)}_i\in\mathbb{F}^{\alpha \times \alpha}$ represents a matrix of coding coefficients used by the $j$th node ($j \geq k+1$ and hence a parity node) to ``mix'' the symbols of the $i$th file piece ${\bf x}_{i}$. $\mathbf{y}_i$ denotes the vector of coded sub-symbols stored in node $i$. $\mathbf{I}_{\alpha}^{i}$ denotes an $\alpha \times \alpha$ identity matrix. The MDS property is guaranteed if the file can be reconstructed from from any subset of size $k$ of the $n$ nodes storing the codeword $\mathbf{y}$.

\textbf{Remark:} The choice of $M$ being a multiple of $k(n-k)$ is due to the following reasons:
\begin{enumerate}
   \item The lowest per node repair bandwidth (in terms of sub-symbols over $\mathbb{GF}(p)$) possible is $\beta=\frac{\alpha}{n-k}$ sub-symbols according to the cut-set bound in \cite{DimakisGWWR:08}. 
   \item Further, for repair of all systematic vector codes occurring in this work, we assume that the number of sub-symbols that will be downloaded from every surviving parity node will be $\beta=\frac{\alpha}{n-k}=\frac{M}{k(n-k)}$. However, note that it may not be possible to download only $\beta$ symbols from each surviving systematic ones unless optimal repair is feasible for that code.  In fact, the goal of efficient repair will be to download as close to $\beta$ sub-symbols as possible from each surviving systematic node.
\end{enumerate}   

\subsection{Repair Vector Design Problem} \label{Sec:RepairDes}
Let $\left[k \right]$ denote the set $\{1,2,3 \ldots k\}$. To maintain the same redundancy when a single systematic node $i\in[k]$ fails, a repair process takes place to regenerate the lost data in a {\it newcomer} storage node.
This process is carried out as linear operations on the content of the $n-1$ remaining nodes, namely, each parity node $j\in\{k+1 \ldots n\}$ sends data of size $\beta=\frac{M}{k(n-k)}$ (i.e., $\beta$ equations) to the newcomer in the form of linear equations:
{\small
\begin{align}
{\bf d}_i^{(j)}&=\left({\bf R}^{j}_i\right)^T\left(\left({\bf P}^{(j)}_1\right){\bf x}_1+\cdots+\left({\bf P}^{(j)}_k\right){\bf x}_k\right)\nonumber\\
&=\left[\left({\bf R}^{j}_i\right)^T{\bf P}^{(j)}_1 \cdots \left({\bf R}^{j}_i\right)^T {\bf P}^{(j)}_k\right] {\bf x},
\end{align}
}where ${\bf R}_i^{j}\in\mathbb{F}^{(n-k)\beta\times \beta}$ is a \textit{repair matrix}, which is to be designed.
In the same manner, all parity nodes proceed in transmitting a total of $\frac{M}{k}$ linear equations (i.e., the size of what was lost) to the newcomer, which eventually receives the following system  of linear equations
{\small
\begin{align}\label{Eqn:repaireq}
{\bf d}_i \hspace{-0.1cm}&=
\hspace{-0.1cm}\underbrace{\left[
\begin{array}{@{}c@{}}
\left({\bf R}^{k+1}_i\right)^T{\bf P}^{(k+1)}_{i}\\
\vdots\\
\left({\bf R}^{n}_i\right)^T{\bf P}^{(n)}_i
\end{array}
\right]\hspace{-0.1cm}{\bf x}_i}_{\text{useful data}}\hspace{-0.1cm}+\hspace{-0.35cm}\sum_{u=1,u\ne i}^k\hspace{-0.1cm}
\underbrace{\left[
\begin{array}{@{}c@{}}
\left({\bf R}^{k+1}_i\right)^T{\bf P}^{(k+1)}_{u}\\
\vdots\\
\left({\bf R}^{n}_i\right)^T{\bf P}^{(n)}_u
\end{array}
\right]\hspace{-0.1cm}{\bf x}_u}_{\text{interference by ${\bf x}_u$}},
\end{align}
}where ${\bf d}_i\in\mathbb{F}^{\frac{M}{k}}$. 
Solving for ${\bf x}_i$ is not possible due to the $(k-1)$ additive {\it interference} components in the received equations.
To retrieve the lost piece of data, we need to ``erase'' the interference terms by downloading additional equations from the remaining $k-1$ systematic nodes and the resulting system has to be full-rank.
To erase the interference generated by the undesired symbols ($\mathbf{x}_u$), we need to download from systematic node $u$ the minimum number of equations that can re-generate the interference due to ${\bf x}_u$, i.e.,  
we need to download data of size equal to 
\begin{equation} \label{Eqn:repairrank}
\gamma_u=\rank\left(\left[
\begin{array}{c}
\left({\bf R}^{k+1}_i\right)^T{\bf P}^{(k+1)}_{u} \\
\vdots \\
\left({\bf R}^{n}_i \right)^T{\bf P}^{(n)}_u
\end{array}
\right]\right)
\end{equation}
Designing $\mathbf{R}_i^j$ to achieve the following:
\begin{equation}\label{Eqn:repairopt}
\min \sum \limits_{u,u \neq i} \gamma_u \mathrm{~subject~ to~} \gamma_i = \frac{M}{k}
\end{equation}
 is the \textit{repair vector (matrix) design problem}. $\gamma_i = \frac{M}{k}$ means that the useful data matrix must have full rank.

The cut-set bound of \cite{DimakisGWWR:08} states that $\beta$ equations from each of the remaining systematic nodes is the minimum one could achieve, i.e., the minimum rank of each interference space is $\beta$.
This results in a minimum download bound of $\frac{n-1}{n-k}\frac{M}{k}=(n-1)\beta$.
Observe that the above benefits can only be unlocked if we treat each stored symbol as a block of smaller $(n-k)\beta$ sub-symbols.

\subsection{Scalar MDS Codes}
When we consider scalar $(n,k)$-MDS codes, we assume that $k$ information symbols $\mathbf{x}=[x_1\ldots x_k]\in \left(\mathbb{F}_{p^m} \right)^{k\times 1}$ are used to generate $n$ coded symbols $\mathbf{y}=[y_1\ldots y_n] \in \left(\mathbb{F}_{p^m}\right)^{n\times 1}$ under the linear generator map
\begin{equation}\label{Eqn:scalarMDSeq}
\mathbf{y}=
\left[
\begin{array}{c}
   y_1\\
  \vdots \\
   \vdots \\
   \vdots \\
  y_n
\end{array}
\right]=
\left[
\begin{array}{ccc}
1 & \ldots & 0\\
    0       & \ddots & 0 \\
     0      & \ldots  & 1 \\
\hline
P^{(k+1)}_1 & \ldots & P^{(k+1)}_k\\
\vdots & \vdots&\vdots\\
P^{(n)}_k & \ldots & P^{(n)}_k
\end{array}
\right]
\left[
\begin{array}{c}
  x_1 \\ 
 \vdots \\
  x_k
\end{array}
\right]
\end{equation}
where $\mathbf{P}$ is the $k\times (n-k)$ matrix that generates the parity symbols of the code and $\mathbb{F}_{p^m}\equiv\mathbb{GF}(p^m)$. 
Similar to the previous section, let $P^{(j)}_{i}$ denote the parity coefficient, drawn from $\mathbb{GF}(p^m)$, used by the $j$th parity node to multiply symbol $x_i$.
Instead of matrices and vectors in the the previous case, here we have scalars drawn from the extension field $\mathbb{GF}(p^m)$. 
The MDS property is equivalent to the requirement that the $k$ information symbols can be reconstructed from any subset of size $k$.

When a node, or a coded symbol is lost, if we wish to repair it using linear methods over the extension field $\mathbb{GF}(p^m)$, we can perform naive repair. Scalar-linear operations on this code binds us to this worst case repair bandwidth cost. 

Moving away from scalar-linear methods, we could instead download ``parts'' of each symbol defined over $\mathbb{GF}(p^m)$.
Observe that, over $\mathbb{GF}(p^m)$, each symbol consists of $m$ sub-symbols defined over $\mathbb{GF}(p)$ and
$\mathbb{GF}(p^m)$ is isomorphic to  a vector space of dimension $m$ over $\mathbb{GF}(p)$.


In the following section, we describe how an extension field can be used to allow decomposition of each coded symbol into sub-symbols, such that a scalar linear MDS codes is interpreted as a a vector-linear MDS code.
The key ideas used are the following:
\begin{enumerate}
\item  Each element of the generator matrix is viewed as a square matrix with dimensions $m \times m$ over the field $\mathbb{GF}(p)$.
\item Every data symbol $x_i$ and every coded symbol $y_i$ over $\mathbb{GF}(p^m)$ are viewed as vectors $\mathbf{x}_i$ and
$\mathbf{y}_i$, respectively, of dimension $m$ over the field $\mathbb{GF}(p)$. 
\end{enumerate}

\section{Vectorizing scalar codes}\label{Sec:Rep} 

We review some results \cite{lidl1996finite}\cite{macwilliams2006theory} regarding representations of finite field elements.
Let the irreducible primitive polynomial $P(x)$ of degree $m$ over the base field $\mathbb{GF}(p)$ that generates $\mathbb{GF}\left(p^m\right)$ be:
\begin{equation}
\label{Eqn:Irredpoly}
P(x)= a_0 + a_1 x+\ldots a_{m-1} x^{m-1} +x^m,
\end{equation}
where $a_0,\ldots, a_{m-1}\in\mathbb{GF}(p)$.
Let $\zeta$ be any root of the polynomial $P(x)$.  Hence, $\zeta$ is a primitive element. There may be more than one root of the primitive polynomial. All primitive elements are isomorphic to each other (extension fields obtained by setting one of the roots to be the primitive element is isomorphic to the one obtained through other roots)
Then, any field element $b\in \mathbb{GF}(p^m)$ can be written as a polynomial of $\zeta$ over $\mathbb{GF}(p)$ of degree at most $m-1$
\begin{equation}
b= b_0+b_1 \zeta+\ldots b_{m-1}\zeta^{m-1}
\end{equation}
where $b_i \in \mathbb{GF}(p)$, $i\in \{0,1 \ldots m-1 \}$. 

\begin{definition}
The \textit{companion matrix} of the primitive polynomial $P(x)= a_0 + a_1 x+\ldots a_{m-1} x^{m-1} +x^m$ is a $m \times m$ matrix given by:
\begin{align*}
 \mathbf{C}= \left[
 \begin{array}{cccccc}
 0 & 0 & 0 & \cdots & 0 & -a_0 \\
 1 & 0 & 0 & \cdots & 0 & -a_1 \\
 0 & 1 & 0 & \cdots & 0 & -a_2 \\
 0 & 0 & 1 & \cdots & 0 & -a_3 \\
 . &  .  &  .  & \cdots & .  &   .    \\
 . &  .  &  .  & \cdots & .  &   .    \\
  . &  .  &  .  & \cdots & .  &   .    \\
 0 & 0 & 0 & \cdots &  1 & -a_{m-1}
 \end{array}
 \right]
\end{align*}
\end{definition}

\begin{enumerate}
\item \textit{Vector Representation}: Any, $b\in\mathbb{GF}(p^m)$ can be interpreted as a vector that belongs
to a vector space of dimension $m$ over $\mathbb{GF}(p)$ with the following vector representation
\begin{equation}
\label{Eqn:Vectorrep}
f(b) = \left[b_0~b_1~ \ldots ~b_{m-1}\right]^T.
\end{equation}

\item \textit{Matrix Representation}: Any nonzero field element in $\mathbb{GF}(p^m)$ can be written as $\zeta^{n}, ~ 0\leq n \leq p^m-2$. The mapping $g(\zeta^\ell)= \mathbf{C}^{\ell}$ is an isomorphism between $\mathbb{GF}\left( p^m \right)$ and  the set of $m \times m$ matrices $\{0, \mathbf{C}^0,\mathbf{C}^1, \ldots, \mathbf{C}^{p^m-2} \}$ over $\mathbb{GF}(p)$ that preserves the field multiplication and addition in terms of matrix multiplication and addition over the space of matrices $(\mathbb{GF}(p))^{m \times m}$. 
\end{enumerate}

We refer to $g(b)=\mathbf{B}$ as the ``multiplication operator" corresponding to $b \in \mathbb{GF}(p^m)$ and to $f(b)=\mathbf{b}$ as the vector representation of $b \in \mathbb{GF}(p^m)$. Let ${\cal M}\left(\mathbb{F}_{p^m}\right)= \{\mathbf{0},\mathbf{C}^0,\mathbf{C}^1\ldots \mathbf{C}^{p^m-2} \} $ be the set of multiplication operators. Then clearly,
\begin{enumerate}
\item[{\it P1}] \textit{Additivity:}  For any  $c,d \in \mathbb{GF}(p)$ and ${\bf A}, {\bf B} \in {\cal M}\left(\mathbb{F}_{p^m}\right)$, we have
$c{\bf A}+d{\bf B} \in {\cal M}(\mathbb{F})$. 
\item[{\it P2}] \textit{Commutativity:}  For any 
${\bf A},{\bf B} \in {\cal
M}\left(\mathbb{F}_{p^m}\right)$, we have ${\bf A}{\bf B} ={\bf B}{\bf A} \in {\cal
M}\left(\mathbb{F}_{p^m}\right) $.
\end{enumerate}

\begin{lemma}\label{lem:multop}\cite{macwilliams2006theory} 
If $c=ab$ where $c,a,b \in \mathbb{GF}(p^m)$, then $\mathbf{c}=\mathbf{A}\mathbf{b}$ where $f(c)=\mathbf{c},~f(b)=\mathbf{b}$ and $g(a)= \mathbf{A}$.
\end{lemma}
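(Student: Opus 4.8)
The plan is to reduce the statement to a single fact: the companion matrix $\mathbf{C}$ is precisely the matrix of the ``multiply by $\zeta$'' map with respect to the polynomial basis $\{1,\zeta,\ldots,\zeta^{m-1}\}$. First I would establish the base identity $f(\zeta b)=\mathbf{C}\,f(b)$ for an arbitrary $b\in\mathbb{GF}(p^m)$. Writing $b=\sum_{i=0}^{m-1}b_i\zeta^i$, we have $\zeta b=\sum_{i=0}^{m-1}b_i\zeta^{i+1}$; the only term that leaves the basis is $b_{m-1}\zeta^m$, and substituting the defining relation $\zeta^m=-\sum_{i=0}^{m-1}a_i\zeta^i$ (which holds because $\zeta$ is a root of $P$) and collecting the coefficient of each $\zeta^j$ yields exactly $-a_0 b_{m-1}$ in position $0$ and $b_{j-1}-a_j b_{m-1}$ in position $j$ for $1\le j\le m-1$. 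Comparing this column-by-column against the displayed form of $\mathbf{C}$ confirms $f(\zeta b)=\mathbf{C}\,f(b)$. This is the one genuine computation in the proof and is the step I expect to require the most care, though the difficulty is purely bookkeeping: one must keep the $0$- versus $1$-indexing of the matrix rows consistent with the exponents of $\zeta$.

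With the base case in hand, the remaining steps are formal. Iterating gives $f(\zeta^\ell b)=\mathbf{C}^\ell f(b)$ by induction on $\ell$, since $f(\zeta^{\ell} b)=f(\zeta\cdot(\zeta^{\ell-1}b))=\mathbf{C}\,f(\zeta^{\ell-1}b)=\mathbf{C}\cdot\mathbf{C}^{\ell-1}f(b)$. Now take any nonzero $a\in\mathbb{GF}(p^m)$ and write $a=\zeta^\ell$; by the definition of the matrix representation $g(a)=\mathbf{C}^\ell=\mathbf{A}$, so $f(c)=f(ab)=f(\zeta^\ell b)=\mathbf{C}^\ell f(b)=\mathbf{A}\,f(b)$, that is $\mathbf{c}=\mathbf{A}\mathbf{b}$, as claimed. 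The degenerate case $a=0$ is immediate, since then $c=0$ and $\mathbf{A}=\mathbf{0}$.

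An equivalent route that avoids singling out a power of $\zeta$ is to expand $a=\sum_{j=0}^{m-1}a_j\zeta^j$ and invoke the additivity property \emph{P1}: because multiplication by a fixed $a$ is a $\mathbb{GF}(p)$-linear map on the vector space, $f(ab)=\sum_j a_j\,f(\zeta^j b)=\big(\sum_j a_j\mathbf{C}^j\big)f(b)$, and $\sum_j a_j\mathbf{C}^j=g(a)=\mathbf{A}$ again by additivity of $g$. Either way, the entire content is the single linear-algebraic observation that $\mathbf{C}$ implements multiplication by the primitive element $\zeta$ in the chosen basis, after which the lemma follows by linearity. I do not anticipate any obstacle beyond the index-tracking in the base identity.
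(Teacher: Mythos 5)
Your proof is correct, but there is nothing in the paper to compare it against: the paper does not prove this lemma at all, it simply cites it to the textbook \cite{macwilliams2006theory}, and none of the appendix proofs concern it. What you have written is the standard self-contained argument that the citation points to. Your base computation checks out: with $P(x)=a_0+a_1x+\cdots+a_{m-1}x^{m-1}+x^m$ and $\zeta^m=-\sum_{i=0}^{m-1}a_i\zeta^i$, the coefficient vector of $\zeta b$ is exactly $\bigl[-a_0b_{m-1},\; b_0-a_1b_{m-1},\;\ldots,\; b_{m-2}-a_{m-1}b_{m-1}\bigr]^T=\mathbf{C}f(b)$, and the induction plus primitivity of $\zeta$ (every nonzero $a$ equals $\zeta^\ell$, with $g(\zeta^\ell)=\mathbf{C}^\ell$ by definition) finishes the nonzero case; $a=0$ is trivial since $g(0)=\mathbf{0}$. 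One small remark on your alternative route: the step $\sum_j a_j\mathbf{C}^j=g(a)$ does not follow from property \emph{P1} alone (which only says $\mathcal{M}(\mathbb{F}_{p^m})$ is closed under $\mathbb{GF}(p)$-linear combinations); it needs the additivity of the map $g$ itself, which the paper asserts as part of the matrix-representation definition but which is logically close to what is being proved. Your first route avoids this entirely, using only the definition of $g$ on powers of $\zeta$, and is therefore the cleaner of the two.
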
 

\subsection{Vectorization of the code in (\ref{Eqn:scalarMDSeq})}

\begin{enumerate}
  \item The information symbols $x_i$ and the coded symbols $y_i$ can be rewritten as $m$-dimensional vectors, ${\bf x}_i$ and ${\bf y}_i$ over $(\mathbb{GF}(p))^{m\times1}$ by setting $\mathbf{x}_i=f(x_i)$ and $\mathbf{y}_i=f(y_i)$. 
   \item Every entry of the generator matrix, i.e. $P_{i}^{(j)}$, can be represented in terms of the multiplication operator $\mathbf{P}_i^{(j)} \in {\cal M}\left(\mathbb{F}_{p^m}\right)$ by setting $\mathbf{P}_i^{(j)}=g(P_i^{(j)})$.
   \item By Lemma \ref{lem:multop}, every $P_i^{(j)} x_i$ is represented by the matrix-vector multiplication $\mathbf{P}_i^{(j)} \mathbf{x}_i$.
\end{enumerate}

 Setting $m=(n-k)\beta$, we observe that we have changed the scalar code in (\ref{Eqn:scalarMDSeq}) into the vector code given by (\ref{Eqn:vectorMDSeq}). The reason for the choice of $m$ has been given in \ref{Sec:VectorMDS}. The only difference between this and a generic vector code is that the matrices $\mathbf{P}_{i}^{(j)}$ are multiplication operators that have specific structure. Note that the same construction has been recently used in \cite{hong2013structured}. Also, this construction can be taken to be the finite field analogue of the procedure for generating irrational dimensions out of a real dimension \cite{motahari2009real} that plays an important role in \textit{real interference alignment}.

\section{Repair Field Elements}
	At this point, one could consider the vectorized code obtained to be a generic vector linear code and design repair matrices $\mathbf{R}_i^j$  to solve (\ref{Eqn:repairopt}) by searching over all possible repair matrices. Any such design seems to depend on the structure of the multiplication operators $\mathbf{P}_i^{(j)} $. However, we use the following technical lemma to illustrate that designing repair matrices (or repair vectors) as in (\ref{Eqn:repairopt}) can be cast as a problem of designing \textit{repair field elements}, when it comes to repairing a vectorized scalar code. This lets us bypass the need for "looking into" the structure of the multiplication operators and the need for checking all possible repair matrices. This is the main technical idea behind the paper.

\vspace{0.3cm}
\begin{lemma} \label{Lemma:multexist}
	For any two nonzero vectors $\mathbf{a}^T,\mathbf{b}^T \in \mathbb{GF}(p)^{1 \times m}$,
there always exists a multiplication
operator (or a matrix) $\mathbf{M} \in {\cal M}\left(\mathbb{F} \right)$ ($m
\times m$) such that $\mathbf{b}^T
\mathbf{M} = \mathbf{a}^T$.
\end{lemma}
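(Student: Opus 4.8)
The plan is to exploit the fact, recorded in property \emph{P1} together with the isomorphism $g$, that the set of multiplication operators ${\cal M}\left(\mathbb{F}_{p^m}\right)$ is closed under $\mathbb{GF}(p)$-linear combinations and has exactly $p^m$ elements, so it is an $m$-dimensional vector space over $\mathbb{GF}(p)$ (indeed a field isomorphic to $\mathbb{GF}(p^m)$). I would first fix the nonzero row vector $\mathbf{b}^T$ and consider the evaluation map $\phi : {\cal M}\left(\mathbb{F}_{p^m}\right) \to \mathbb{GF}(p)^{1\times m}$ defined by $\phi(\mathbf{M}) = \mathbf{b}^T \mathbf{M}$. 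By \emph{P1} this map is $\mathbb{GF}(p)$-linear in $\mathbf{M}$, and its domain and codomain are both $\mathbb{GF}(p)$-vector spaces of the same dimension $m$. The whole statement then reduces to showing that $\phi$ is surjective, since any target $\mathbf{a}^T$ -- in particular a nonzero one -- would then have a preimage.

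Next I would prove that $\phi$ is injective, which by equality of the (finite) dimensions immediately upgrades to surjectivity. For injectivity it suffices to show the kernel is trivial: suppose $\mathbf{b}^T \mathbf{M} = \mathbf{0}$ for some $\mathbf{M} \in {\cal M}\left(\mathbb{F}_{p^m}\right)$, and assume toward a contradiction that $\mathbf{M} \neq \mathbf{0}$. Because $g$ is a field isomorphism onto ${\cal M}\left(\mathbb{F}_{p^m}\right)$, the nonzero element $\mathbf{M}$ has a multiplicative inverse that itself lies in ${\cal M}\left(\mathbb{F}_{p^m}\right)$ and is a genuine matrix inverse (concretely $\mathbf{M} = \mathbf{C}^{\ell}$ for some $0 \le \ell \le p^m - 2$, and $\det \mathbf{C} = \pm a_0 \neq 0$ since irreducibility of $P(x)$ forces $a_0 \neq 0$, so every power $\mathbf{C}^{\ell}$ is invertible). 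Multiplying $\mathbf{b}^T \mathbf{M} = \mathbf{0}$ on the right by $\mathbf{M}^{-1}$ then gives $\mathbf{b}^T = \mathbf{0}$, contradicting the hypothesis that $\mathbf{b}^T$ is nonzero. Hence $\mathbf{M} = \mathbf{0}$ and $\ker \phi = \{\mathbf{0}\}$.

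With $\phi$ injective between two spaces of equal finite dimension $m$ over $\mathbb{GF}(p)$, it is automatically bijective; in particular every $\mathbf{a}^T \in \mathbb{GF}(p)^{1\times m}$, and a fortiori every nonzero one, lies in the image, and $\mathbf{M} = \phi^{-1}(\mathbf{a}^T)$ is the desired operator. The only point demanding care -- and the single place where the special structure of ${\cal M}\left(\mathbb{F}_{p^m}\right)$, rather than an arbitrary set of matrices, is genuinely used -- is the invertibility of nonzero multiplication operators; everything else is a counting/linear-algebra argument. One could alternatively argue more concretely that the $p^m$ row vectors $\{\mathbf{0}\}\cup\{\mathbf{b}^T \mathbf{C}^{\ell}\}_{\ell=0}^{p^m-2}$ are pairwise distinct (again using invertibility of the $\mathbf{C}^{\ell}$, together with the fact that $\mathbf{C}$ has multiplicative order $p^m-1$), so that they exhaust $\mathbb{GF}(p)^{1\times m}$; but the kernel argument above is cleaner and is the route I expect to take.
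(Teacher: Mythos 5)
Your proof is correct and takes essentially the same approach as the paper: both fix one vector, regard $\mathbf{M}\mapsto \mathbf{b}^T\mathbf{M}$ as a map from ${\cal M}\left(\mathbb{F}_{p^m}\right)$ to row vectors, establish injectivity from the fact that nonzero multiplication operators have full rank (the paper phrases your trivial-kernel step as $\mathbf{a}^T(\mathbf{M}_i-\mathbf{M}_j)=0$ being impossible because $\mathbf{M}_i-\mathbf{M}_j$ is a nonzero operator, which is where property P1 enters), and then upgrade injectivity to surjectivity by counting---pigeonhole over the $p^m-1$ nonzero row vectors in the paper, equality of $\mathbb{GF}(p)$-dimensions in your version. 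Indeed, the ``more concrete'' alternative you sketch in your final paragraph is almost verbatim the paper's own proof.
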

\vspace{0.3cm}
\begin{proof}
The proof is provided in the appendix.
\end{proof}

\textbf{Remark:} We have represented the multiplication of $c=ab$ over the extension field as $\mathbf{c}=\mathbf{A}\mathbf{b}$, where $\mathbf{A}$ is a multiplication operator and $\mathbf{b}$ is the vector representation. But this corresponds to right multiplication \textit{only}. Clearly, $\mathbf{c}^T=\mathbf{b}^T\mathbf{A}$ is not true as the matrix $\mathbf{A}$ in general is not symmetric. Hence, we require Lemma \ref{Lemma:multexist} to establish properties when the matrix is multiplied by a vector from the \textit{left}. 

Consider the repair problem for the vectorized code as in Section \ref{Sec:VectorMDS} and use Lemma \ref{Lemma:multexist}.
We download $\beta$ equations from every node since we have vectorized over the field $\mathbb{GF}(p)$ and $m=(n-k)\beta$.
Without loss of generality, let us consider the repair of node $i=1$. As in Section \ref{Sec:VectorMDS}, repair matrices which multiply the $n-k$ parities are denoted ${\bf R}^{k+1}, \ldots {\bf R}^{n}\in\left(\mathbb{GF}(p)\right)^{\beta(n-k) \times \beta}$, dropping the subscript $i$ since we will state everything for repair of node $1$. Let $ \mathbf{r}^{\ell}_j $ denote the $j$-th column of ${\bf R}^{\ell}$. This corresponds to the $j$th equation downloaded from node $\ell$.

Now, due to Lemma \ref{Lemma:multexist}, we can fix an arbitrary nonzero $\left( \mathbf{r}\right)^T$ as a reference vector. Then
$\forall \ell,~\forall j,~ \exists \mathbf{M}^{\ell}_j \in {\cal M}(\mathbb {F}_{p^m})$,
\begin{equation} \label{Eqn:repairwrite}
	\left(\mathbf{r}^{\ell}_j \right)^T = \left( \mathbf{r}
\right)^T\mathbf{M}^{\ell}_j .
\end{equation}
 Hence, all the repair vectors can be replaced by the \textit{repair field elements} $M^{\ell}_j$ corresponding to the operators $\mathbf{M}^{\ell}_j$ .  In terms of the repair field elements, we have the following important theorem that gives an alternate characterization of any repair scheme.

\vspace{0.3cm}
\begin{theorem}\label{Thm:RepairField}
Consider a repair scheme with repair matrices $\mathbf{R}^{\ell},~k+1 \leq \ell \leq n$ for repairing node $1$ (for any node $i$ in general) where the rank of the column of vectors corresponding to the $u$th data vector as in (\ref{Eqn:repairrank}) is
\begin{equation} 
\gamma_u=\rank\left(\left[
\begin{array}{c}
\left({\bf R}^{k+1} \right)^T{\bf P}^{(k+1)}_{u} \\
\vdots \\
\left({\bf R}^{n} \right)^T{\bf P}^{(n)}_u
\end{array}\nonumber
\right]\right).
\end{equation}

This is possible if and only if one can find repair field elements $M^{\ell}_j \in \mathbb{F}_{p^m}$ for every equation $j \in[1,\beta]$ downloaded from every node $ \ell \in[k+1,n]$, such that $\forall u$:  
{\small
\begin{align}\label{Eqn:rankcond}
\gamma_u &= \rank_p \left(M^{k+1}_1 P_u^{k+1}, \dots M^{k+1}_{\beta} P_u^{k+1}, \dots
M^{n}_1 P_u^{(n)}, \right.  \nonumber \\ 
& \qquad \left. \dots ,M^{n}_{\beta}P_u^{(n)} \right)
\end{align}
}where $\rank_p(a_1,\dots a_m)$ with $a_i\in\mathbb{GF}(p^m)$ is defined according to the following restricted definition of linear independence:
A set of field elements $A_i \in \mathbb{GF}(p^m)$ are linearly independent over a sub-field $\mathbb{GF}(p)$ (i.e., have $\rank_p$ equal to the cardinality of this set of elements), when one cannot find non zero scalars
$v_i \in \mathbb{GF}(p)$ such that $\sum v_i A_i =0$.
\end{theorem}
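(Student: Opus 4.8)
The plan is to show the two rank quantities coincide by transporting the entire computation from the algebra of $m \times m$ matrices over $\mathbb{GF}(p)$ into the field $\mathbb{GF}(p^m)$, using the isomorphism $g$ together with the single fixed reference vector $\mathbf{r}$. First I would note that each row of the stacked matrix defining $\gamma_u$ has the form $\left(\mathbf{r}^{\ell}_j\right)^T \mathbf{P}^{(\ell)}_u$, indexed by $\ell \in [k+1,n]$ and $j \in [1,\beta]$, where $\mathbf{r}^{\ell}_j$ is the $j$th column of $\mathbf{R}^{\ell}$, and that this rank is taken over $\mathbb{GF}(p)$ since all entries lie there. Substituting (\ref{Eqn:repairwrite}) rewrites the row as $\left(\mathbf{r}\right)^T \mathbf{M}^{\ell}_j \mathbf{P}^{(\ell)}_u$. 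Because $\mathbf{M}^{\ell}_j = g(M^{\ell}_j)$ and $\mathbf{P}^{(\ell)}_u = g(P^{(\ell)}_u)$ are multiplication operators and $g$ preserves products (the Matrix Representation property), their product is again a multiplication operator $g\!\left(M^{\ell}_j P^{(\ell)}_u\right)$. Hence every row equals $\left(\mathbf{r}\right)^T g(N^{\ell}_{j,u})$ with the field element $N^{\ell}_{j,u} = M^{\ell}_j P^{(\ell)}_u$.

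Next I would express the rank as the number of rows minus the dimension of the $\mathbb{GF}(p)$-space of linear dependence relations among those rows. A dependence $\sum_{\ell,j} v^{\ell}_j \left(\mathbf{r}\right)^T g(N^{\ell}_{j,u}) = 0$ with scalars $v^{\ell}_j \in \mathbb{GF}(p)$ can be pulled inside using additivity (property \emph{P1}) and the $\mathbb{GF}(p)$-linearity of $g$, so it is equivalent to $\left(\mathbf{r}\right)^T g\!\left(\sum_{\ell,j} v^{\ell}_j N^{\ell}_{j,u}\right) = 0$, where $\sum_{\ell,j} v^{\ell}_j N^{\ell}_{j,u}$ is a single field element.

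The crux is then one observation: for a fixed nonzero $\mathbf{r}$ and any $c \in \mathbb{GF}(p^m)$, we have $\left(\mathbf{r}\right)^T g(c) = 0$ if and only if $c = 0$. This holds because a nonzero field element corresponds to some $\mathbf{C}^{\ell}$, which is invertible since the companion matrix $\mathbf{C}$ has determinant $\pm a_0 \neq 0$ (as $P(x)$ is irreducible, $a_0 \neq 0$); multiplying an invertible matrix on the left by a nonzero row $\left(\mathbf{r}\right)^T$ cannot produce the zero row. Applying this with $c = \sum_{\ell,j} v^{\ell}_j N^{\ell}_{j,u}$ shows the row-dependence holds exactly when $\sum_{\ell,j} v^{\ell}_j M^{\ell}_j P^{(\ell)}_u = 0$ in $\mathbb{GF}(p^m)$, i.e. exactly when $\{v^{\ell}_j\}$ is a $\rank_p$-dependence of the elements $\{M^{\ell}_j P^{(\ell)}_u\}$. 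The two dependence spaces therefore coincide, forcing equality of the ranks, which is (\ref{Eqn:rankcond}). For the ``if'' direction I would reverse the construction: given any repair field elements $M^{\ell}_j$, define the repair vectors by $\left(\mathbf{r}^{\ell}_j\right)^T := \left(\mathbf{r}\right)^T g(M^{\ell}_j)$, assemble the $\mathbf{R}^{\ell}$, and run the same chain of equalities to recover the stated $\gamma_u$.

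I expect the main obstacle to be justifying that committing to one global reference vector $\mathbf{r}$ costs nothing, so that the correspondence is a genuine bijection rather than merely one inclusion. This is precisely what Lemma \ref{Lemma:multexist} supplies: the map $\mathbf{M} \mapsto \left(\mathbf{r}\right)^T \mathbf{M}$ is onto all nonzero rows, so every admissible repair vector arises from some multiplication operator $\mathbf{M}^{\ell}_j$. Care is also needed for the degenerate case $\mathbf{r}^{\ell}_j = \mathbf{0}$, which contributes a zero row to the matrix and corresponds to the field element $M^{\ell}_j = 0$; since such terms affect neither the matrix rank nor $\rank_p$, the counting of dependence relations stays consistent on both sides, completing the equivalence.
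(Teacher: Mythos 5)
Your proposal is correct and follows essentially the same route as the paper's own proof: both use Lemma \ref{Lemma:multexist} to write every repair vector as $\left(\mathbf{r}\right)^T\mathbf{M}^{\ell}_j$ for a fixed nonzero reference vector, pass row dependencies through additivity (Property \emph{P1}) to field-element dependencies, and reverse the construction for the converse. Your write-up is in fact somewhat more careful than the paper's, since you explicitly justify that $\left(\mathbf{r}\right)^T g(c)=0$ forces $c=0$ (via invertibility of nonzero multiplication operators) and you handle the degenerate case of zero repair vectors, both of which the paper leaves implicit.
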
 
\begin{proof}
The proof uses Lemma 3, and is relegated to the appendix.
\end{proof}
\vspace{0.3cm}
We call the above formulation of the repair problem as {\it repair field element design},  as one needs to design one repair field element (i.e $M^{\ell}_j$'s) for every repair vector $\mathbf{r}^{\ell}_j$ or equivalently for every downloaded equation. The repair bandwidth (in bits) of any such scheme with repair field elements is proportional to the sum of the ranks, i.e. $\sum_{i=1}^k r_i \log_2 \left(p \right)$ bits where the code is subpacketized over $\mathbb{GF}(p)$.

\subsection{Illustration of repair of a (5,3) Reed-Solomon Code} \label{Sec:53reed}

Consider a $(5,3)$-Reed Solomon code over $\mathbb{F}=\mathbb{GF}(2^4)$. Let $\omega$ be the fifth root of unity. 
Using the explicit formula for the generator matrix of the systematic Reed Solomon code given in \cite{versfeld2010systematic}, we obtain the following structure for the generator matrix
$\mathbf{G}$
\begin{equation}
	\mathbf{G} = \left[ 
	\begin{array}{ccccc}
	1 & 0 & 0 &
\frac{(\omega^4-\omega^2)(\omega^4-\omega^3)}{(\omega-\omega^2)(\omega-\omega^3)
} &
\frac{(\omega^5-\omega^2)(\omega^5-\omega^3)}{(\omega-\omega^2)(\omega-\omega^3)
} \\
	0 & 1 & 0 &
\frac{(\omega^4-\omega)(\omega^4-\omega^3)}{(\omega^2-\omega)(\omega^2-\omega^3)
} &
\frac{(\omega^5-\omega)(\omega^5-\omega^3)}{(\omega^2-\omega)(\omega^2-\omega^3)
} \\
	0 & 0 & 1 &
\frac{(\omega^4-\omega)(\omega^4-\omega^2)}{(\omega^3-\omega)(\omega^3-\omega^2)
} &
\frac{(\omega^5-\omega)(\omega^5-\omega^2)}{(\omega^3-\omega)(\omega^3-\omega^2)
} 
	\end{array}
	\right].\nonumber
\end{equation}
For the repair problem, without loss of generality, the generator matrix given above can be simplified by factoring out some coefficients along every row and renormalizing so that we can work on the following equivalent generator matrix
	\begin{equation}
	\mathbf{G} = \left[ 
	\begin{array}{ccccc}
		1 & 0 & 0 & 1 & \omega^2+\omega+1 \\
		0 & 1 & 0 & 1 & \omega \\
		0 & 0 & 1 & 1 & \omega^2+1 
	\end{array}
	\right].
	\end{equation}
Let $\zeta$ be the primitive element of $\mathbb{F}$ corresponding to the primitive polynomial $P(x)=1+x+x^4$. Then, $\zeta^{15}=1$ and $\omega =\zeta^3$. 

Now, we consider the repair problem under the vector representation of the code over $\mathbb{GF}(2)$. 
For this case, $\beta=2$. 
Hence, $\beta=2$ equations are downloaded from the $2$ parity nodes.  
The cut-set bound (from the optimal repair bound) for this scenario is downloading $8$ equations in total.

\begin{table}
\centering
\begin{tabular}{|c|p{5cm}|}
\hline 
\multirow{4}{*}{Repair for node $1$} & $g_1(\omega)=\omega^3+1  $\\
& $g_2(\omega)=\omega^2+1  $\\
& $f_1(\omega)=\omega g_1(\omega)$ \\
& $f_2(\omega) =g_2(\omega)(\omega^2+1) $ \\ \hline
\multirow{4}{*}{Repair for node $2$} & $g_1(\omega)=\omega+1 $\\ 
& $g_2(\omega)=\omega$ \\
& $f_1(\omega)=(\omega^2+\omega+ 1)g_1(\omega)$\\ 
& $f_2(\omega) =g_2(\omega)(\omega^2+1) $  \\
\hline 
\multirow{4}{*}{Repair for node $3$} & $g_1(\omega)=\omega+1 $ \\
& $ g_2(\omega)=\omega $ \\ 
& $f_1(\omega)=(\omega^2+\omega+ 1)g_1(\omega) $ \\
& $ f_2(\omega) =g_2(\omega)\omega $ \\              
\hline
\end{tabular}
\caption {Repair field elements for the repair of systematic nodes for the (5,3) RS code.}
\label{Tab:Repair5,3}
\end{table}

The polynomials in Table \ref{Tab:Repair5,3} correspond to
repair by downloading $10$ equations in the event of a failure of any systematic node for the $(5,3)$ Reed Solomon code
over $\mathbb{GF}(16)$.  Now, we illustrate this using the framework of repair field elements. Each parity node stores four equations over the binary field. Let us consider the repair of node $1$. $f_i(\omega)$ denotes the repair field element corresponding to the bit $i$ downloaded from the first parity node (node $4$) and $g_i(\omega)$ denotes the repair field element for the second parity node (node $5$). Let us assume that the corresponding repair vectors are $\left(\mathbf{r}^4_1\right)^T, \left(\mathbf{r}^4_2 \right)^T,\left(\mathbf{r}^5_1\right)^T$ and $ \left(\mathbf{r}^5_2 \right)^T$.

As an illustration of the results in this section, we show how the repair field elements in Table \ref{Tab:Repair5,3} correspond to a repair bandwidth of $10$ equations over $\mathbb{GF}(2)$ for repair of node $1$. First, we show how the column of vectors corresponding to data node $1$ in (\ref{Eqn:repaireq}) is full rank, if we use the repair vectors obtained through the repair field elements in  Table \ref{Tab:Repair5,3}. We need to verify the following:
\begin{align*}
 \gamma_1 & = \rank_2 \left(f_1(\omega),f_2(\omega),g_1(\omega)(\omega^2+\omega+1), \right.\\
 & \qquad  \left. g_2(\omega)(\omega^2+\omega+1) \right) = 4.
\end{align*}

 By expressing everything as a polynomial in $\zeta$ of degree at most $3$ with coefficients from $\mathbb{GF}(2)$ using the irreducible polynomial, we have
 \begin{align}\label{Eqn:fullrank53}
 \left[
 \begin{array}{c}
 f_1(\omega)\\
 f_2(\omega)\\
 g_1(\omega)(\omega^2+\omega+1)\\
 g_2(\omega)(\omega^2+\omega+1) 
 \end{array}
  \right]
 = 
  \left[ 
  \begin{array}{cccc}
  0 & 1 & 1 & 1 \\
  1&1&1&0\\ 
  0&0&0&1 \\ 
  1&1&0&0 
  \end{array}
  \right] \left[
  \begin{array}{c}
  \zeta^3 \\ 
  \zeta^2 \\
   \zeta \\
    1
    \end{array}
  \right].
 \end{align}
  We see that they are linearly independent over $\mathbb{GF}(2)$ (full rank). Hence, $\gamma_1 =4$. Now, we set $\left(\mathbf{r}^4_1 \right)^T=\left(\mathbf{r}\right)^T f_1(\mathbf{W})$, $\left(\mathbf{r}^4_2 \right)^T=\left(\mathbf{r}\right)^T f_2(\mathbf{W})$, $\left(\mathbf{r}^5_1 \right)^T=\left(\mathbf{r}\right)^T g_1(\mathbf{W})$ and $\left(\mathbf{r}^5_2 \right)^T=\left(\mathbf{r}\right)^T g_2(\mathbf{W})$
where $\mathbf{W}$ is the multiplication operator for $\omega$.  By applying Theorem \ref{Thm:RepairField}, the column of vectors corresponding to data from node $1$ as in (\ref{Eqn:repaireq}) given by:
\begin{align*}
 \left[
\begin{array}{c}
\left({\bf r}^{4}_1\right)^T \left({\bf P}^{(4)}_{1}\right)^T \\
 \left({\bf r}^{4}_2\right)^T \left({\bf P}^{(4)}_{1}\right)^T \\
\left({\bf r}^{5}_1\right)^T \left({\bf P}^{(4)}_{1}\right)^T \\
\left({\bf r}^{5}_2\right)^T \left({\bf P}^{(4)}_{1}\right)^T 
\end{array}
\right]=    
 \left[
\begin{array}{c}
 \mathbf{r}^T f_1(\mathbf{W})\\
\mathbf{r}^T f_2(\mathbf{W}) \\ 
\mathbf{r}^Tg_1(\mathbf{W}) (\mathbf{W}^2+\mathbf{W}+\mathbf{I}) \\
\mathbf{r}^Tg_2(\mathbf{W}) (\mathbf{W}^2+\mathbf{W}+\mathbf{I}) \\
\end{array}
\right] 
\end{align*}
 is full rank over $\mathbb{GF}(2)$, with the above assignment of repair vectors because of (\ref{Eqn:fullrank53}). Here, $\mathbf{r}$ is any arbitrary non-zero reference vector.

Now, the rank of interference terms follows from similar observations regarding repair field elements: $\omega g_1(\omega)= f_1(\omega)$ implies $\gamma_2=3$ (making column for data vector $3$ to have rank $3$) and $(\omega^2+1) g_2(\omega)= f_2(\omega)$ implies $\gamma_3=3$. Therefore, $\sum \gamma_i=10$ equations over $\mathbb{GF}(2)$ needs to be downloaded for repair for node 1.  Repair bandwidth for repair of nodes $2$ and $3$ can be verified similarly.

 Now, We argue that $10$ bits is the optimal linear repair bandwidth achievable for this code.
 We consider the case where node 2 fails and we will assume that $8$ repair equations are sufficient.
To recover the lost data, according to Eq. (\ref{Eqn:rankcond}), we require 
\begin{equation}\label{Eqn:fullrank}
\rank_2\left[f_1(\zeta)~ f_2(\zeta)~ g_1(\zeta)\zeta^3~  g_2(\zeta)\zeta^3
\right] = 4.
\end{equation}
  
  If $8$ equations are sufficient then there must exist polynomials such that the following conditions are
true:
      \begin{eqnarray}
        \rank \left[f_1(\zeta)~ f_2(\zeta)~ g_1(\zeta)(\zeta^6+\zeta^3+1)
\right. \nonumber\\
   \left. g_2(\zeta)(\zeta^6+\zeta^3+1)) \right] = 2, \nonumber \\
        \rank \left[f_1(\zeta)~ f_2(\zeta)~ g_1(\zeta)(\zeta^6+1)~ 
g_2(\zeta)(\zeta^6+1)) \right] 
 =  2 .
      \end{eqnarray}
      
        Then the only possibility is that $g_1(\zeta)(\zeta^6+\zeta^3+1) =
v_{1} f_1(\zeta+v_{2}
 f_2(\zeta)$ and $g_2(\zeta)(\zeta^6+\zeta^3+1) = v_{3}
f_1(\zeta)+v_{4} f_2(\zeta)$. Similarly,
  $g_1(\zeta)(\zeta^6+1) = v_{5} f_1(\zeta)+v_{6} f_2(\zeta)$ and
$g_2(\zeta)(\zeta^6+1)
 =v_{7} f_1(\zeta)+v_{8} f_2(\zeta)$. Here, all $v_i \in \mathbb{GF}(2)$.
 
 Therefore,  
       \begin{align}
           g_1(\zeta)(\zeta^3) & =& (v_{1}+v_{5})
f_1(\zeta)+(v_{2}+v_{6}) f_2(\zeta), \nonumber \\
          g_2(\zeta)(\zeta^3) & =& (v_{3}+v_{7})
f_1(\zeta)+(v_{4}+v_{8}) f_2(\zeta). \nonumber
       \end{align}
 
This violates the full rank condition of (\ref{Eqn:fullrank}). Similar
arguments hold for repair of systematic nodes $1$ and $3$. Further, very similar arguments can be made to show that $9$ equations are not enough for repair of the nodes. The arguments are lengthy but follow a similar style to the one above. The crucial property that is used in these converse results is the following property: In the fifth row $[1+\zeta^6+\zeta^3,\zeta^3,\zeta^6+1]$  of the generator matrix, two coefficients add up to give the third coefficient.

\subsection{Different degrees of subpacketization}
Note that we made no assumption about $\mathbb{GF}(p)$. So this could be an extension field by itself.  So, for a given extension field $\mathbb{GF}(p^{(n-k)r})$, where $p$ is prime,
one could do the vectorization over $\mathbb{GF}(p^{r})$, so that the effective degree of subpacketisation is $\beta=1$. The lowest degree is $1$ and the highest possible degree is $r$ and any intermediate degree would be any $s$ that divides $r$. The following intuitive result shows that any repair scheme for a lower degree of subpacketization $\beta'$ can be 
 implemented using an equivalent repair scheme with a higher degree of subpacketization $r\beta'$ and with the same benefits with respect to the repair
 bandwidth.
 
 \vspace{0.3cm}
 \begin{lemma}\label{Thm:HighLowSub}
 Consider a scalar systematic $(n,k)$ MDS code over a field $\mathbb{GF}(p^{a(n-k)})$, vectorized over $\mathbb{GF}(p^a)$, with $\beta=1$.
 Let the associated repair field elements be $M^{k+1},M^{k+2}\ldots M^{n}$ that operate on the parity nodes numbering from $k+1$ to $n$. Consider the rank of the column of vectors in (\ref{Eqn:rankcond}) corresponding to node $i$ given by: $r_i = \rank_{p^a} \left( \left[M^{k+1}
 P^{k+1}_i ~ M^{k+2}P^{k+2}_i \ldots M^{n}P^{n}_i\right] \right)$. Define new repair field elements, for the code subpacketized over $\mathbb{GF}(p)$ to be: $\left[\tilde{M}_1^{i}~\tilde{M}_2^{i} \ldots \tilde{M}_{a}^{i} \right]=\left[M^{i},~M^{i}\zeta,\dots M^{i}\zeta^{a-1}\right]$ corresponding to $a$ equations being drawn from node $i$, where $k+1 \leq i \leq n$. Here, $\zeta \in \mathbb{GF}(p^a)$ is the primitive element. The system of new repair elements $\{\tilde{M}_j^i \}$ have the same repair bandwidth as $\{M_i \}$.
 \end{lemma}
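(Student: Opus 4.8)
The plan is to reduce the claimed bandwidth equality to a single node-by-node rank identity, and then to prove that identity using the tower of fields $\mathbb{GF}(p)\subset\mathbb{GF}(p^a)\subset\mathbb{GF}(p^{a(n-k)})$. Throughout I would reserve $\ell$ for the parity-node index ($k+1\le\ell\le n$) and $i$ for the data-node whose interference rank is being measured, to avoid the overloaded use of $i$ in the statement. Recall from the discussion following Theorem \ref{Thm:RepairField} that the repair bandwidth (in bits) equals $\log_2(q)$ times the sum of the interference ranks, where $q$ is the size of the sub-field over which we vectorize. For the $\mathbb{GF}(p^a)$-vectorization each rank unit costs $\log_2(p^a)=a\log_2 p$ bits, whereas for the $\mathbb{GF}(p)$-vectorization it costs $\log_2 p$ bits. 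Hence it would suffice to prove, for every data node $i$, that $\tilde r_i = a\, r_i$, since then $\sum_i \tilde r_i\log_2 p = a\sum_i r_i\log_2 p=\sum_i r_i\log_2(p^a)$. Applying the same identity to the failed node simultaneously preserves feasibility: if its useful-data column had full rank $n-k$ over $\mathbb{GF}(p^a)$, it acquires full rank $a(n-k)$ over $\mathbb{GF}(p)$ with the refined elements.

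Next I would fix a data node $i$ and set $S_i=\{M^{\ell}P_i^{\ell}:k+1\le\ell\le n\}\subset\mathbb{GF}(p^{a(n-k)})$, so that $r_i=\rank_{p^a}(S_i)$. By Theorem \ref{Thm:RepairField} together with the definition $\tilde M_j^{\ell}=M^{\ell}\zeta^{j-1}$, the elements whose $\rank_{p}$ equals $\tilde r_i$ are exactly $\tilde S_i=\{\zeta^{j-1}M^{\ell}P_i^{\ell}:k+1\le\ell\le n,\ 0\le j-1\le a-1\}$. The step I expect to carry the weight of the argument is the span identity
\[
\mathrm{span}_{\mathbb{GF}(p)}\big(\tilde S_i\big)\;=\;\mathrm{span}_{\mathbb{GF}(p^a)}\big(S_i\big),
\]
where both sides are viewed as $\mathbb{GF}(p)$-subspaces of $\mathbb{GF}(p^{a(n-k)})$. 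Since $\zeta$ is a primitive element of $\mathbb{GF}(p^a)$, its minimal polynomial over $\mathbb{GF}(p)$ has degree $a$, so $\{1,\zeta,\ldots,\zeta^{a-1}\}$ is a $\mathbb{GF}(p)$-basis of $\mathbb{GF}(p^a)$. For $\supseteq$, any $v=\sum_{\ell}c_{\ell}\,M^{\ell}P_i^{\ell}$ with $c_{\ell}\in\mathbb{GF}(p^a)$ rewrites, upon expanding $c_{\ell}=\sum_{t=0}^{a-1}d_{\ell,t}\zeta^{t}$ with $d_{\ell,t}\in\mathbb{GF}(p)$, as a $\mathbb{GF}(p)$-combination of the elements of $\tilde S_i$. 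For $\subseteq$, any $\mathbb{GF}(p)$-combination $\sum_{\ell,t}e_{\ell,t}\,\zeta^{t}M^{\ell}P_i^{\ell}$ regroups as $\sum_{\ell}\big(\sum_{t}e_{\ell,t}\zeta^{t}\big)M^{\ell}P_i^{\ell}$, a $\mathbb{GF}(p^a)$-combination of $S_i$.

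To finish, I would count dimensions. Writing $V=\mathrm{span}_{\mathbb{GF}(p^a)}(S_i)$, we have $\dim_{\mathbb{GF}(p^a)}V=r_i$ by definition, and since $V$ is a $\mathbb{GF}(p^a)$-subspace the tower law gives $\dim_{\mathbb{GF}(p)}V=[\mathbb{GF}(p^a):\mathbb{GF}(p)]\cdot\dim_{\mathbb{GF}(p^a)}V=a\,r_i$. The span identity shows $\mathrm{span}_{\mathbb{GF}(p)}(\tilde S_i)=V$, whence $\tilde r_i=\dim_{\mathbb{GF}(p)}V=a\,r_i$; summing over all data nodes then yields the equal bandwidth. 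The only delicate point is the span equality, and inside it the fact that the power basis $\{1,\zeta,\ldots,\zeta^{a-1}\}$ spans $\mathbb{GF}(p^a)$ over $\mathbb{GF}(p)$, which is precisely what the field-generating property of the primitive element $\zeta$ supplies; everything else is multiplicativity of extension degrees. Conceptually, multiplying $S_i$ by the basis powers $\zeta^{t}$ realizes \emph{extension of scalars} from $\mathbb{GF}(p)$ to $\mathbb{GF}(p^a)$ inside $\mathbb{GF}(p^{a(n-k)})$, which is the reason refining the subpacketization from $\mathbb{GF}(p^a)$ down to $\mathbb{GF}(p)$ neither helps nor hurts the repair bandwidth.
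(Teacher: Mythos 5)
Your proposal is correct, and its computational engine is the same as the paper's: expand scalars from $\mathbb{GF}(p^a)$ in the power basis $\{1,\zeta,\ldots,\zeta^{a-1}\}$, so that a $\mathbb{GF}(p^a)$-linear combination of the elements $M^{\ell}P_i^{\ell}$ becomes a $\mathbb{GF}(p)$-linear combination of the refined elements $M^{\ell}P_i^{\ell}\zeta^{s}$, and conversely. The difference is in how that computation is packaged, and the difference matters. The paper reduces the lemma to the claim that the original elements are linearly dependent over $\mathbb{GF}(p^a)$ if and only if the refined family is linearly dependent over $\mathbb{GF}(p)$, and stops there; strictly speaking, this full-set dependence equivalence does not by itself determine the ranks (knowing that a set is dependent does not reveal its rank, and the equivalence cannot be invoked on arbitrary subsets of the refined family, only on those closed under multiplication by all the powers $\zeta^{s}$). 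Your route closes exactly this gap: you upgrade the two-sided coefficient expansion to the span identity $\mathrm{span}_{\mathbb{GF}(p)}(\tilde{S}_i)=\mathrm{span}_{\mathbb{GF}(p^a)}(S_i)$, and then the tower law $\dim_{\mathbb{GF}(p)}V=a\,\dim_{\mathbb{GF}(p^a)}V$ for a $\mathbb{GF}(p^a)$-subspace $V$ gives the node-by-node rank scaling $\tilde{r}_i=a\,r_i$, from which both feasibility (full rank of the useful-data block) and the bit-level bandwidth equality follow once you apply the conversion $\log_2(p^a)=a\log_2 p$, a bookkeeping step the paper leaves implicit. You also justify a point the paper uses tacitly, namely that a primitive $\zeta$ has degree $a$ over $\mathbb{GF}(p)$, so the powers $1,\zeta,\ldots,\zeta^{a-1}$ really do form a basis. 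In short, the paper's version buys brevity; yours buys an actual proof of the quantitative claim ``same repair bandwidth.''
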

 \vspace{0.3cm}
 \begin{proof}
  The proof is relegated to the appendix.
 \end{proof}

\section{Clique Repair}\label{Sec:cliquerepair}
In this section, we use the repair field elements framework to prove the following theorem that gives an optimal repair scheme when $\beta=1$ for any $(n,n-2)$ scalar MDS code. We call this scheme \textit{Clique Repair}. From now on, without loss of generality, we assume that $P_{j}^{(k+1)} =1$, i.e. all parity coefficients for parity node $k+1$ are $1$. This can be justified as it does not affect the MDS repair problem.
\vspace{0.3cm}
\begin{theorem}\label{Thm:WuDimakis}
Consider a systematic $\left(n,n-2 \right)$-MDS code over $\mathbb{GF}\left(p^{2r}\right)$ and an undirected graph $G(V,E)$ such that $|V| = k$ and $(i,j) \in E$ iff $P^{(k+2)}_i \left(P^{(k+2)}_j\right)^{-1} \in \mathbb{GF}\left(p^r \right)$. 
Then, with linear repair schemes, node $i$ cannot be repaired with BW less than $M-\frac{C_i}{2}\frac{M}{k}$ when vectorized over $\mathbb{GF}\left(p^{r}\right)$, where $C_i$ is the size of the largest clique of $G$ not containing node $i$. 
\end{theorem}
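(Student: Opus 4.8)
The plan is to convert the claimed bandwidth lower bound into a purely combinatorial statement about cliques in $G$, using the repair field element characterization of Theorem~\ref{Thm:RepairField}. Since $n-k=2$ and we vectorize $\mathbb{GF}(p^{2r})$ over the subfield $\mathbb{GF}(p^r)$, the degree of subpacketization is $\beta=1$, so \emph{any} linear repair scheme for node $i$ in this model is described by exactly two repair field elements $M^{k+1},M^{k+2}\in\mathbb{GF}(p^{2r})$, one per parity node. By Theorem~\ref{Thm:RepairField}, successful recovery forces $\rank_{p^r}\!\big(M^{k+1}P_i^{(k+1)},\,M^{k+2}P_i^{(k+2)}\big)=2$, which already forces $M^{k+1},M^{k+2}\neq 0$ (the MDS property makes every $P_u^{(j)}$ nonzero), and the interference rank contributed by a surviving node $u\neq i$ is $\gamma_u=\rank_{p^r}\!\big(M^{k+1}P_u^{(k+1)},\,M^{k+2}P_u^{(k+2)}\big)\in\{1,2\}$.

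Next I would normalize using the standing assumption $P_u^{(k+1)}=1$ and set $t=M^{k+2}\big(M^{k+1}\big)^{-1}$, a fixed nonzero element. Then $\gamma_u=1$ precisely when the two nonzero elements $M^{k+1}$ and $M^{k+2}P_u^{(k+2)}$ are $\mathbb{GF}(p^r)$-proportional, i.e. when $t\,P_u^{(k+2)}\in\mathbb{GF}(p^r)$; otherwise $\gamma_u=2$. Writing $S(t)=\{u\in[k] : t\,P_u^{(k+2)}\in\mathbb{GF}(p^r)\}$ for the set of ``cheap'' nodes and $s=|S(t)\setminus\{i\}|$ for the number of surviving cheap nodes, the total download in $\mathbb{GF}(p^r)$-symbols is the two parity equations plus the interference-cancelling equations, namely $\mathrm{BW}=2+\sum_{u\neq i}\gamma_u=2+2(k-1)-s=2k-s$. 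Thus minimizing bandwidth is exactly maximizing $s$.

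The crucial step is the bound $s\le C_i$. For any two $u,v\in S(t)$, both $t\,P_u^{(k+2)}$ and $t\,P_v^{(k+2)}$ lie in $\mathbb{GF}(p^r)$ and are nonzero, so their quotient $P_u^{(k+2)}\big(P_v^{(k+2)}\big)^{-1}$ lies in $\mathbb{GF}(p^r)$; by the definition of $E$ this means $(u,v)\in E$. Hence $S(t)$ is a clique of $G$, and therefore so is $S(t)\setminus\{i\}$, which does not contain $i$; by the definition of $C_i$ this gives $s\le C_i$. Substituting yields $\mathrm{BW}\ge 2k-C_i$, and since $M=k(n-k)\beta=2k$ and $\tfrac{M}{k}=2$ in $\mathbb{GF}(p^r)$-units, this equals $M-\tfrac{C_i}{2}\tfrac{M}{k}$, as claimed.

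The one place that needs care — and carries the main conceptual content — is this clique identification: recognizing that the $\mathbb{GF}(p^r)$-dependence condition $t\,P_u^{(k+2)}\in\mathbb{GF}(p^r)$ cuts out exactly the sub-cliques of $G$, and that the failed node $i$ must be excluded. Even if $i$ itself lies in $S(t)$, it is never downloaded from and so contributes no saving, which is precisely why the bound involves the largest clique \emph{avoiding} $i$ rather than the global clique number. I would also confirm that restricting to the balanced download of $\beta=1$ equation per parity loses no generality inside the repair-field-element model of Theorem~\ref{Thm:RepairField}, so that the bound holds against all linear schemes in this framework.
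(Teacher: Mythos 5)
Your proof is correct and follows essentially the same route as the paper's: both invoke Theorem~\ref{Thm:RepairField} to reduce any linear repair scheme (with $\beta=1$) to two repair field elements, normalize the first parity's element away, observe that the interference rank of a surviving node $u$ drops to $1$ exactly when $t\,P_u^{(k+2)}\in\mathbb{GF}(p^r)$, and connect this subfield-membership condition to cliques of $G$. The only packaging difference is that the paper first partitions $G$ into disjoint cliques (cosets of $\mathbb{GF}(p^r)\setminus\{0\}$) and argues that the choice of $\mu$ can make at most one clique other than the one containing $i$ go dependent, whereas you show directly that the ``cheap'' set $S(t)$ is itself a clique avoiding $i$ (forced by the full-rank condition $\gamma_i=2$); your explicit count $\mathrm{BW}=2k-s\ge 2k-C_i=M-\frac{C_i}{2}\frac{M}{k}$ makes the final accounting, which the paper leaves somewhat implicit, fully rigorous.
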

\begin{proof}
The proof is relegated to the appendix.
\end{proof}





\textbf{Remark:} There is an alternative way to see the above theorem. $\mathbb{GF}(p^r) \backslash \{0\}$ is a multiplicative subgroup of $\mathbb{GF}(p^{2r}) \backslash \{0\}$. Consider the set of cosets formed by the subgroup  $\mathbb{GF}(p^r) \backslash \{0\}$. Consider the repair of node $i$. Among all cosets that do no contain the field element $P^{k+2}_i$,  pick the coset that contains the largest number of elements from $\{P^{k+2}_j\}_{j \neq i}$. Let the number of elements from $\{P^{k+2}_j\}_{j \neq i}$ which lie in this coset be $C_i$. Then, the repair bandwidth is no less than $M - \frac{C_i}{2} \frac{M}{k}$ in terms of $\mathbb{GF}(p^r)$ symbols.

Although the theorem above only specifies a lower bound, one can come up with an algorithm to achieve the optimum performance. It is easy to check that the following algorithm works. 
\begin{algorithm}
     \caption{\it{Generate Clique}}
     \begin{algorithmic}
      \WHILE{$i=1 \to k$}
          \WHILE {$j =1 \to i-1$ } 
            \IF {$P^{(k+2)}_i\left(P^{(k+2)}_j\right)^{-1} \in \mathbb{GF}(p^r)$}
                 \STATE $E \leftarrow (i,j)$
             \ENDIF    
           \ENDWHILE       
      \ENDWHILE
     \end{algorithmic}
   \end{algorithm}
The algorithm Generate Clique identifies the disjoint cliques (or cosets). Let us assume that there is a list $\{C[i]\}_{1 \leq i \leq m}$  such that $C[i]$ contains all vertices contained in clique $i$ or coset $i$. The algorithm Find Repair finds the optimal repair field element $\mu$ for repairing node $i$. 
    \begin{algorithm}
     \caption{\it{Find Repair}}
     \begin{algorithmic}
     \STATE $\rm{Find~}$ $N: i \in C[N]$
     \STATE $k_{\rm{max}} \leftarrow \arg \max \limits_{k \neq N} \lVert C[k] \rVert $   
      \STATE $\rm{Pick~ some ~ node~}\ell \in C_{k_{\rm{max}}}$.
      \STATE $\mu \leftarrow \left(P^{(k+2)}_{\ell}\right)^{-1}$
     \end{algorithmic}
   \end{algorithm}
Notice that the algorithm runs using $ O(n^2)$ field multiplication operations. The scheme gives an analytical connection between the repair BW and the coefficients of the generator matrix (see remark after Theorem \ref{Thm:WuDimakis}).

Consider the vector representation of the $(5,3)$ RS code in Section \ref{Sec:53reed} over $\mathbb{GF}(2^2)$.  Then by applying 
Theorem \ref{Thm:WuDimakis}, we find that all the three nodes lie in the same clique. In other words, $(\omega^2+\omega+1)^{-1} \omega$, $\omega (\omega^2+1)^{-1}$ belong to
$\mathbb{GF}(2^2)$. Hence, for this code, clique repair does not give any gain in terms of repair bandwidth.

 Now, we present examples of bandwidth savings that are possible for a $(6,4)$ Reed Solomon code and for the $(14,10)$ Reed Solomon code employed in HDFS open source module. As we will see, clique repair gives nontrivial bandwidth savings over naive repair for the $(6,4)$ Reed Solomon code considered below. This can be improved further by going to a higher degree of subpacketization.

\section{Analysis of repair of (6,4) Reed Solomon Codes}

Here, we consider a $(6,4)$-RS code over $\mathbb{GF}(2^4)$. Let $\zeta$ be the primitive element of $\mathbb{F}$ corresponding to the primitive polynomial $P(x)=1+x+x^4$
Using the formula in
\cite{versfeld2010systematic}, we obtain the following systematic generator matrix
\begin{equation}
\mathbf{G}= \left[ 
	\begin{array}{cccccc}
	1 & 0 & 0 &  0 &1 & \zeta^3+\zeta^2+\zeta+1 \\
	0 & 1 & 0 &  0 &1 &  \zeta+1\\
	0 & 0 & 1 &  0 &1 &  1\\
	0 & 0 & 0 &  1 &1 &  \zeta^2             
	\end{array}
	\right]
\end{equation}
We consider the vector representation of the code over $\mathbb{GF}(2^2)$ ($\beta=1$). 
If we apply Theorem \ref{Thm:WuDimakis} to this code, there are $3$ cliques that are formed. 
The first clique (or coset) contains nodes $1$ and $4$ while the second one contains $2$ and the third one contains node $3$. 
By the clique repair algorithm presented in Section \ref{Sec:cliquerepair}, the repair of nodes $2$ and $3$ require $6$ repair equations over $\mathbb{GF}(4)$ to be downloaded. For the repair of nodes $1$ and $4$, $7$ equations over $\mathbb{GF}(4)$ need to be downloaded which is close to the file size $M=8$, while the cut-set bound is $\frac{n-1}{n-k}\frac{M}{k}=5$
equations.  

Now, consider a higher degree of subpacketization, i.e. each node stores $4$ elements over $\mathbb{GF}(2)$. Now, $M=16$ elements. We get a good repair scheme (by Lemma \ref{Thm:HighLowSub}) for nodes $2$ and $3$ over   $\mathbb{GF}(2)$ that requires $12$ equations by converting the clique repair scheme for these nodes over $\mathbb{GF}(2^2)$. Hence, the repair bandwidth for $2$ and $3$ is $6 \times 2=12$ equations for repair over $\mathbb{GF}(2)$ . The cut set bound is $5 \times 2 =10$  equations over $\mathbb{GF}(2)$.

For this case, the repair scheme with repair field elements given in Table \ref{Tab:Repair6,4} improves the repair BW for nodes $1$ and $4$ to $12$ equations compared to the clique repair equivalent that requires $7 \times 2=14$ equations. $\{f_i\}$ represent repair field elements for the first parity node and $\{g_i\}$ represent repair field elements for the second parity node. It is possible to show that $12$ equations is the optimal linear repair bandwidth for this code. The argument is lengthy, but similar in style to the one in Section \ref{Sec:53reed} for the (5,3) Reed-Solomon code and hence we skip it.

\begin{table}
\centering
\begin{tabular}{|c|p{5cm}|}
\hline 
\multirow{4}{*}{Repair for node $1$} & $g_1(\zeta)=\zeta^{-2}  $\\
& $g_2(\zeta)=1$\\
& $f_1(\zeta)=1$ \\
& $f_2(\zeta) =\zeta^2$ \\ \hline
\multirow{4}{*}{Repair for node $4$} & $g_1(\zeta)=1$\\ 
& $g_2(\zeta)=\zeta$ \\
& $f_1(\zeta)=1$\\ 
& $f_2(\zeta)=\zeta$  \\
\hline 
\end{tabular}
\caption {Repair field elements for repair of nodes $1$ and $4$ for the systematic (6,4) RS code with subpacketization over $\mathbb{GF}(2)$.}
\label{Tab:Repair6,4}
\end{table}

\section{Numerical Results on the $(14,10)$ Reed-Solomon Code implemented in the Hadoop file system}

The Apache Hadoop Distributed File System (HDFS) relies by default on block replication for data reliability. 
A module called HDFS RAID (\cite{sathiamoorthy2013xoring, raidwiki}) was recently developed for HDFS that allows the deployment of Reed-Solomon and also more sophisticated distributed storage codes. HDFS RAID is currently used in production clusters including Facebook analytics clusters storing more than $30$ PB of data. In this section, we present numerical results on improving the repair performance of the specific $(14,10)$ Reed-Solomon code implemented in HDFS-RAID~\cite{raidwiki}.  

HDFS RAID implements a systematic Reed Solomon code over the extension field $\mathbb{GF}(2^8)$. Let $\zeta$ be the root of the primitive polynomial $1+x^2+x^3+x^4+x^8$ that generates the extension field. The generator matrix used is:
    \begin{align*}
        \mathbf{G}= \left[
        \begin{array}{c}
        \mathbf{I}_{10} \\ 
       \mathbf{P}
        \end{array}
         \right]
    \end{align*}
  where $\mathbf{P}$ is a $4 \times 10$ matrix given by:
  \begin{align*}
      \mathbf{P}^T= \left[ 
       \begin{array}{cccc}
        \zeta^6 &   \zeta^{78} &  \zeta^{249} &   \zeta^{75} \\
       \zeta^{81} &  \zeta^{59} &  \zeta^{189}  &  \zeta^{163} \\
      \zeta^{169} & \zeta^{162} & \zeta^{198} & \zeta^{131}\\
   \zeta^{137}& \zeta^{253}  & \zeta^{49} & \zeta^{143}\\
   \zeta^{149} & \zeta^{177} & \zeta^{96} & \zeta^{205}\\
   \zeta^{211} &  \zeta^{71}  & \zeta^{157} & \zeta^{134}\\
   \zeta^{140} & \zeta^{236} & \zeta^{154} &  \zeta^{43}\\
    \zeta^{49}  & \zeta^{213} &  \zeta^{112} & \zeta^{ 88}\\
    \zeta^{94}  & \zeta^{171} & \zeta^{138} & \zeta^{ 95}\\
   \zeta^{101} &  \zeta^{13} &  \zeta^{148} &  \zeta^{173}
      \end{array}
      \right]
  \end{align*}
    Since the number of parities is $4$ ($n-k=4$), the clique repair technique is not applicable. We consider repair with the highest possible subpacketization, i.e. $\beta=2$ and each node stores $(n-k)\beta = 8$ elements over $GF(2)$ and $M=80$. The repair requires downloading $2$ equations form every parity node. We provide a repair scheme in terms of the eight repair field elements $M_1^{11},M_2^{11}, \ldots M_1^{14}, M_2^{14}$, belonging to $\mathbb{GF}(2^8)$, as in Theorem \ref{Thm:RepairField}. The repair scheme, given in Table \ref{repair1014}, lists the repair field elements for repair of each node and the total number of equations to be downloaded for repair in each case. The average number of equations to be downloaded is  $64.2$ equations. The naive repair involves downloading $80$ equations and the lower bound $\frac{n-1}{n-k}\frac{M}{k}$ gives $26$ equations. We note that the repair scheme that we provide is not the optimal for the code because an exhaustive search involves checking a huge number of combinations (about $2^{64}$ combinations) of the repair field elements. We have searched over about $100000$ random combinations of the repair field elements to produce this repair scheme that saves about $20$ percent bandwidth over naive repair.
    
    \textbf{Remark:} In \cite{sathiamoorthy2013xoring}, a new implementation of a locally repairable code based on the $(14,10)$ code is used to optimize repair. It saves $50$ percent bandwidth over naive repair but incurs a cost of $14$ percent in additional storage overhead. We have demonstrated that the $(14,10)$ code used "as is" without any storage overhead can give non-trivial savings.
    \begin{table*}
    \centering
    \caption{Repair Scheme for $(14,10)$ Reed Solomon Code that saves $20$ percent in repair bandwidth. The naive repair involves downloading $80$ bits for repair.}
    \begin{tabular}{|c | c | c |}
        \hline
        Systematic node repaired & Repair field elements $\left[M_1^{11}~ M_2^{11} \ldots M_1^{14}~M_2^{14} \right]$ &Repair Bandwidth (bits downloaded) \\    
        \hline
        1                                             & $\left[\zeta^{69}~\zeta^{203}~\zeta^{189}~\zeta^{64}~\zeta^{170}~\zeta^{173}~\zeta^{64}~\zeta^{174} \right]$ & $65$ \\
       
        2                                             & $\left[\zeta^{8}~\zeta^{191}~\zeta^{175}~\zeta^{248}~\zeta^{18}~\zeta^{1}~\zeta^{69}~\zeta^{126} \right]$ & $64$ \\
      
        3                                            & $\left[\zeta^{153}~\zeta^{15}~\zeta^{101}~\zeta^{3}~\zeta^{223}~\zeta^{179}~\zeta^{114}~\zeta^{14} \right]$ & $64$ \\
        
        4                                           & $\left[\zeta^{92}~\zeta^{86}~\zeta^{31}~\zeta^{129}~\zeta^{67}~\zeta^{213}~\zeta^{67}~\zeta^{144} \right]$ & $64$ \\
      
        5                                          & $\left[\zeta^{46}~\zeta^{213}~\zeta^{86}~\zeta^{151}~\zeta^{28}~\zeta^{169}~\zeta^{69}~\zeta^{146} \right]$ & $63$ \\
    
           6                                          & $\left[\zeta^{83}~\zeta^{164}~\zeta^{182}~\zeta^{116}~\zeta^{104}~\zeta^{185}~\zeta^{245}~\zeta^{178} \right]$ & $64$ \\
     
           7                                         &  $\left[\zeta^{57}~\zeta^{48}~\zeta^{14}~\zeta^{111}~\zeta^{195}~\zeta^{60}~\zeta^{221}~\zeta^{132} \right]$ & $64$ \\
       
           8                                         &  $\left[\zeta^{51}~\zeta^{174}~\zeta^{206}~\zeta^{224}~\zeta^{104}~\zeta^{100}~\zeta^{52}~\zeta^{143} \right]$ & $65$ \\
       
          9                                         &  $\left[\zeta^{84}~\zeta^{250}~\zeta^{143}~\zeta^{76}~\zeta^{21}~\zeta^{225}~\zeta^{207}~\zeta^{105} \right]$ & $65$ \\
         
         10                                         &  $\left[\zeta^{161}~\zeta^{180}~\zeta^{131}~\zeta^{89}~\zeta^{69}~\zeta^{37}~\zeta^{15}~\zeta^{177} \right]$ & $64$ \\ 
        \hline   
    \end{tabular}  
   \label{repair1014}
    \end{table*}
\section{Conclusion}
       We introduced a framework for repairing scalar codes by treating them as vectors over a smaller field. This is achieved by treating multiplication of scalar field elements in the original field as a matrix-vector multiplication operation over the smaller field. Interference alignment conditions map to designing repair field elements in the large field. Further using the conditions on designing repair field elements, we introduced the \textit{clique repair} scheme for two parities when the degree of subpacketization is $1$, which establishes a connection between the coefficients of the generator matrix and the repair schemes possible. We exhibited good repair schemes for a few Reed-Solomon codes including the one currently deployed in Facebook.
       
       This work hints at the existence of scalar MDS codes with good repair properties. An interesting problem would be to come up with easily testable analytical conditions, similar in spirit to the clique repair scheme, for codes with larger number of parities and for higher degrees of subpacketization.  Sufficient conditions for a specific class of codes like Reed Solomon would be also interesting. More generally, it seems that scalar MDS codes with near optimal repair could be designed using this framework.
 
\section{Acknowledgement}
   We thank the anonymous reviewers for their helpful comments and suggestions that helped us improve the paper immensely.  
\bibliographystyle{IEEEtran}
\bibliography{MDSrepairbib}

\begin{thebibliography}{10}
\providecommand{\url}[1]{#1}
\csname url@samestyle\endcsname
\providecommand{\newblock}{\relax}
\providecommand{\bibinfo}[2]{#2}
\providecommand{\BIBentrySTDinterwordspacing}{\spaceskip=0pt\relax}
\providecommand{\BIBentryALTinterwordstretchfactor}{4}
\providecommand{\BIBentryALTinterwordspacing}{\spaceskip=\fontdimen2\font plus
\BIBentryALTinterwordstretchfactor\fontdimen3\font minus
  \fontdimen4\font\relax}
\providecommand{\BIBforeignlanguage}[2]{{%
\expandafter\ifx\csname l@#1\endcsname\relax
\typeout{** WARNING: IEEEtran.bst: No hyphenation pattern has been}%
\typeout{** loaded for the language `#1'. Using the pattern for}%
\typeout{** the default language instead.}%
\else
\language=\csname l@#1\endcsname
\fi
#2}}
\providecommand{\BIBdecl}{\relax}
\BIBdecl

\bibitem{shanmugam2012repair}
K.~Shanmugam, D.~S. Papailiopoulos, A.~G. Dimakis, and G.~Caire, ``A repair
  framework for scalar {MDS} codes,'' in \emph{50th Annual Allerton Conference
  on Communication,Control and Computing (Allerton), 2012}.\hskip 1em plus
  0.5em minus 0.4em\relax IEEE, 2012, pp. 1166--1173.

\bibitem{DimakisGWWR:08}
A.~G. Dimakis, P.~B. Godfrey, Y.~Wu, M.~J. Wainwright, and K.~Ramchandran,
  ``Network coding for distributed storage systems,'' \emph{IEEE Transactions
  on Information Theory}, vol.~56, no.~9, pp. 4539--4551, 2010.

\bibitem{Survey}
A.~G. Dimakis, K.~Ramchandran, Y.~Wu, and C.~Suh, ``A survey on network codes
  for distributed storage,'' \emph{Proceedings of the IEEE}, vol.~99, no.~3,
  pp. 476--489, 2011.

\bibitem{storagewiki}
``Distributed storage wiki.'' \url{http://tinyurl.com/storagecoding}.

\bibitem{Tamo}
I.~Tamo, Z.~Wang, and J.~Bruck, ``{MDS} array codes with optimal rebuilding,''
  in \emph{IEEE International Symposium on Information Theory Proceedings
  (ISIT), 2011}.\hskip 1em plus 0.5em minus 0.4em\relax IEEE, 2011, pp.
  1240--1244.

\bibitem{PermCodes}
V.~R. Cadambe, C.~Huang, S.~A. Jafar, and J.~Li, ``Optimal repair of {MDS}
  codes in distributed storage via subspace interference alignment,''
  \emph{arXiv preprint arXiv:1106.1250}, 2011.

\bibitem{RashmiProduct}
K.~Rashmi, N.~B. Shah, and P.~Kumar, ``Optimal exact-regenerating codes for
  distributed storage at the {MSR} and {MBR} points via a product-matrix
  construction,'' \emph{IEEE Transactions on Information Theory,}, vol.~57,
  no.~8, pp. 5227--5239, 2011.

\bibitem{SuhR:09}
C.~Suh and K.~Ramchandran, ``Exact-repair {MDS} code construction using
  interference alignment,'' \emph{IEEE Transactions on Information Theory},
  vol.~57, no.~3, pp. 1425--1442, 2011.

\bibitem{KVSKR:09}
K.~Rashmi, N.~B. Shah, P.~V. Kumar, and K.~Ramchandran, ``Explicit construction
  of optimal exact regenerating codes for distributed storage,'' in \emph{47th
  Annual Allerton Conference onCommunication, Control, and Computing, 2009.
  Allerton}.\hskip 1em plus 0.5em minus 0.4em\relax IEEE, 2009, pp. 1243--1249.

\bibitem{cadambe2011polynomial}
V.~R. Cadambe, C.~Huang, J.~Li, and S.~Mehrotra, ``Polynomial length {MDS}
  codes with optimal repair in distributed storage,'' in \emph{Conference
  Record of the Forty Fifth Asilomar Conference on Signals, Systems and
  Computers (ASILOMAR), 2011}.\hskip 1em plus 0.5em minus 0.4em\relax IEEE,
  2011, pp. 1850--1854.

\bibitem{PD1}
D.~S. Papailiopoulos, A.~G. Dimakis, and V.~R. Cadambe, ``Repair optimal
  erasure codes through hadamard designs,'' in \emph{49th Annual Allerton
  Conference on Communication, Control, and Computing (Allerton), 2011}.\hskip
  1em plus 0.5em minus 0.4em\relax IEEE, 2011, pp. 1382--1389.

\bibitem{rashmi2013piggy}
K.~Rashmi, N.~B. Shah, and K.~Ramchandran, ``A piggybacking design framework
  for read-and download-efficient distributed storage codes,'' \emph{arXiv
  preprint arXiv:1302.5872}, 2013.

\bibitem{Local1}
D.~S. Papailiopoulos and A.~G. Dimakis, ``Locally repairable codes,'' in
  \emph{IEEE International Symposium on Information Theory Proceedings (ISIT),
  2012}.\hskip 1em plus 0.5em minus 0.4em\relax IEEE, 2012, pp. 2771--2775.

\bibitem{Local2}
G.~M. Kamath, N.~Prakash, V.~Lalitha, and P.~V. Kumar, ``Codes with local
  regeneration,'' \emph{arXiv preprint arXiv:1211.1932}, 2012.

\bibitem{Local3}
P.~Gopalan, C.~Huang, H.~Simitci, and S.~Yekhanin, ``On the locality of
  codeword symbols,'' 2011.

\bibitem{Local4}
A.~S. Rawat, O.~O. Koyluoglu, N.~Silberstein, and S.~Vishwanath, ``Optimal
  locally repairable and secure codes for distributed storage systems,''
  \emph{arXiv preprint arXiv:1210.6954}, 2012.

\bibitem{huang2012erasure}
C.~Huang, H.~Simitci, Y.~Xu, A.~Ogus, B.~Calder, P.~Gopalan, J.~Li, S.~Yekhanin
  \emph{et~al.}, ``Erasure coding in windows azure storage,'' in \emph{USENIX
  conference on Annual Technical Conference, USENIX ATC}, 2012.

\bibitem{sathiamoorthy2013xoring}
M.~Sathiamoorthy, M.~Asteris, D.~Papailiopoulos, A.~G. Dimakis, R.~Vadali,
  S.~Chen, and D.~Borthakur, ``Xoring elephants: Novel erasure codes for big
  data,'' \emph{arXiv preprint arXiv:1301.3791}, 2013.

\bibitem{Survey1}
F.~Oggier and A.~Datta, ``Coding techniques for repairability in networked
  distributed storage systems,'' 2012.

\bibitem{Survey2}
A.~Datta and F.~Oggier, ``An overview of codes tailor-made for networked
  distributed data storage,'' \emph{arXiv preprint arXiv:1109.2317}, 2011.

\bibitem{lidl1996finite}
R.~Lidl and H.~Niederreiter, \emph{Finite fields}.\hskip 1em plus 0.5em minus
  0.4em\relax Cambridge University Press, 1996, vol.~20.

\bibitem{macwilliams2006theory}
F.~MacWilliams and N.~Sloan, \emph{The Theory of error-correcting codes}.\hskip
  1em plus 0.5em minus 0.4em\relax North-Holland, 2006.

\bibitem{hong2013structured}
S.-N. Hong and G.~Caire, ``Structured lattice codes for 2 $\times$ 2 $\times$ 2
  {MIMO} interference channel,'' \emph{arXiv preprint arXiv:1301.6453}, 2013.

\bibitem{motahari2009real}
A.~S. Motahari, S.~O. Gharan, and A.~K. Khandani, ``Real interference alignment
  with real numbers,'' \emph{arXiv preprint arXiv:0908.1208}, 2009.

\bibitem{versfeld2010systematic}
D.~J. Versfeld, J.~N. Ridley, H.~C. Ferreira, and A.~S. Helberg, ``On
  systematic generator matrices for {R}eed--{S}olomon codes,'' \emph{IEEE
  Transactions on Information Theory}, vol.~56, no.~6, pp. 2549--2550, 2010.

\bibitem{raidwiki}
\url{HDFS-Wiki: http://wiki.apache.org/hadoop/HDFS-RAID}.

\end{thebibliography}

\begin{appendix}
\subsection{Proof of Lemma \ref{Lemma:multexist}}
\begin{proof}
		We note that multiplication of the matrix $\mathbf{M}$ from
the left by $\mathbf{a}^T$ does not represent field multiplication. Hence, with respect to left multiplication, the matrix $M$
does not \textit{necessarily} act as a
multiplication operator. The theorem implies that given any two arbitrary non
zero repair vectors, one can find a
multiplication matrix that connects both. 

		Since, non zero field elements in $\mathbb{F}$ are finite,
there are finitely many operators in ${\cal
M}\left(\mathbb{F}\right)$. Let them be denoted by
$\mathbf{M}_1,\mathbf{M}_2,\ldots ..\mathbf{M}_{p^m-1}$. All these
matrices (or operators) have full rank. We consider the products,
$\mathbf{a}^T\mathbf{M}_i$. We show that all of them
are distinct. Suppose for some $i \neq j, \mathbf{a}^T \mathbf{M}_i =
\mathbf{a}^T \mathbf{M}_j $, then 
\begin{equation}
	\mathbf{a}^T \left(\mathbf{M}_i-\mathbf{M}_j\right) = 0.
\end{equation}
But $\mathbf{M}_i -\mathbf{M}_j$ is another multiplication operator by
additivity property. It is non zero and has
full rank since $\mathbf{M}_i \neq \mathbf{M}_j$. This means all the $p^m-1$
products are different. Since there are only
$p^m-1$ non zero repair vectors $\mathbf{b}^T$, given any $\mathbf{b}^T$, one
can always find a $\mathbf{M}_j$ such that
$\mathbf{a}^T \mathbf{M}_j = \mathbf{b}^T$.
\end{proof}

\subsection{Proof of Theorem \ref{Thm:RepairField}}
\begin{proof}
Consider a repair scheme with repair vectors $\left( \mathbf{r}^{\ell}_j \right)^T, ~\ell \in [k+1,n],~ j \in [1,\beta]$. Taking an arbitrary reference vector $\left( \mathbf{r}\right)^T$, every other repair vector can be written in the form given by (\ref{Eqn:repairwrite}) using Lemma \ref{Lemma:multexist}. Consider the repair field elements $M_j^{\ell}$ that correspond to the multiplication operators $\mathbf{M}_j^{\ell}$ obtained from (\ref{Eqn:repairwrite}) . All multiplication matrices are full rank matrices. If there exists scalars $v^{\ell}_j \in \mathbb{GF}(p), ~\ell \in \{\ell_1,\ell_2 \dots \ell_q \}, j \in \{j_1,j_2 \dots j_q\}$, with at least one nonzero $v^{\ell}_j $, such that: 
\begin{equation}
  \sum \limits_{\ell,j} v^{\ell}_j \left(\mathbf{r}^{\ell}_j \right)^T \mathbf{P}^{(\ell)}_u  =0 .
\end{equation}
Then this is equivalent to
\begin{equation}
  \sum \limits_{\ell,j} v^{\ell}_j \left(\mathbf{r} \right)^T \mathbf{M}_j^{\ell}  \mathbf{P}^{(\ell)}_u =0. 
\end{equation}
Using the fact that the reference vector $\left(\mathbf{r}\right)^T $ is non zero and Property P1:
\begin{equation}
  \sum \limits_{\ell,j} v^{\ell}_j M_j^{\ell}  {P}^{(\ell)}_u =0. 
\end{equation}
This gives the rank condition over sub-field $\mathbb{GF}(p)$ as stated in (\ref{Eqn:rankcond}). This proves the forward direction.

For the converse, given a set of repair field elements it is possible to construct a set of repair multiplication operators and together with an arbitrary choice of a non-zero reference repair
vector, one can construct repair vectors satisfying the same rank conditions. 
\end{proof}

\subsection{Proof of Lemma \ref{Thm:HighLowSub}}
\begin{proof}
It is enough to show that $b$ field elements of the form $\{M_{j}P_{j}\},~1 \leq j \leq b$ are linearly dependent over $\mathbb{GF}(p^a)$  if and only if $ab$ field elements
$\{M_{j}P_{j}\zeta^{s}\},~1 \leq j \leq b,~0 \leq s \leq a-1$ are also linearly dependent over $\mathbb{GF}(p)$. Here, $M_{j}$ correspond to the repair field elements and $P_{j}$ correspond to the coefficients of the generator matrix corresponding to the parity node. Linear dependence over $\mathbb{GF}(p^a)$
implies that there exists scalars $v_j \in \mathbb{GF}(p^a)$, with at least one of them non-zero, such that $\sum \limits_{j} v_j M_{j}P_{j} = 0$.
Let us rewrite field elements $v_j$ in $\mathbb{GF}(p^a)$ as polynomials in $\zeta$ with coefficients $v_{js}$ from $\mathbb{GF}(p)$. Hence, the linear dependency relation becomes $\sum \limits_{j} \sum \limits_{s} v_{js}M_{j}P_{j}\zeta^s = 0$.
Hence over $\mathbb{GF}(p)$, $\{M_jP_{j}\zeta^s\},~ 1 \leq b,~ 0 \leq s \leq a-1$ are linearly dependent. The converse is also true since some scalar set 
$\zeta_{js} \in \mathbb{GF}(p)$ with one non zero element determines a scalar set $v_{j} \in \mathbb{GF}(p^a)$ with one non zero element. Hence, the claim follows.
\end{proof}
\end{appendix}

\subsection{Proof of Theorem \ref{Thm:WuDimakis}}
\begin{proof}
If $P^{(k+2)}_x \left(P^{(k+2)}_y\right)^{-1} \in  \mathbb{GF} \left(p^r\right)$ and $ P^{(k+2)}_y \left( P^{(k+2)}_z\right) ^{-1}  \in  \mathbb{GF}\left(p^r\right)$, then $P^{k+2}_x \left(P^{(k+2)}_y\right)^{-1}P^{k+2}_y \left(P^{k+2}_z\right)^{-1}  = \hspace{-10pt} P^{k+2}_x \left(P^{k+2}_z\right)^{-1} \in \mathbb{GF}(p^r)$.
The transitivity property partitions the graph $G$ into disjoint cliques. 

Using Theorem \ref{Thm:RepairField} we have that there are two repair field elements, i.e. $1$, $\mu \in \mathbb{GF}(p^{2r})$ corresponding to the two repair vectors that will be used to multiply the contents of the two parities respectively. 
The repair field elements for parity $1$ is $1$ because the corresponding repair vector acts as the reference vector. 
Then, the rank of $i$-th block  is $2$ if $1$ and $\mu P^{(k+2)}_i$ are linearly independent over sub-field $\mathbb{GF}(p^r)$. 
Similarly, the rank would be $1$ if they are linearly dependent, i.e $\mu P^{(k+2)}_i \in \mathbb{GF}(p^r)$. 
Now we establish the following property: if $i$ and $j$ are in the same clique, then either both columns of elements are simultaneously linearly dependent or linearly independent over $\mathbb{GF}(p^r)$. This is due to the fact that $\mu P^{k+2}_i \in \mathbb{GF}\left(p^r\right)$ forces $\mu P^{k+2}_i \left(P^{k+2}_i\right)^{-1} P^{k+2}_j \in \mathbb{GF}\left(p^r\right)$.

Similarly, if $i$ and $j$ are in different cliques, then the corresponding columns of vectors cannot be linearly dependent simultaneously. 
Suppose they are,
then $\mu\left(P^{(k+2)}_i\right) \in \mathbb{GF}(p^r)$ and  $\mu \left(P^{(k+2)}_j\right) \in \mathbb{GF}(p^r)$. Therefore, $\left(P^{(k+2)}_j\right)^{-1} \mu^{-1}\mu P^{(k+2)}_i = \left(P^{(k+2)}_j\right)^{-1}P^{(k+2)}_i \in \mathbb{GF}(p^r)$. But $(i,j) \notin E$ and therefore a contradiction.

For repair of node $i$, $\mu$ is chosen in such a way that $\mu P^{(k+2)}_i \notin \mathbb{GF}(p^r)$, so that the corresponding column of vectors are linearly independent. 
This selection of $\mu$ forces all blocks corresponding to the nodes in the same clique to be linearly independent and it can at most make columns corresponding to exactly one other clique
linearly dependent. Hence, the reduction in number of equations to be downloaded comes from the dependent clique. From this, the last claim in the theorem follows. 
\end{proof}

\end{document}